\newtheorem{assumption}{Assumption}
\newtheorem{proposition}{Proposition}
\newtheorem{theorem}{Theorem}
\newtheorem{lemma}{Lemma}
\newtheorem{claim}{Claim}
\newtheorem{remark}{Remark}
\newtheorem{example}{Example}
\let\originaleqref\eqref
\renewcommand{\eqref}{\originaleqref}
\begin{document}

\title{Contract Design With Safety Inspections}
\author{ Alireza Fallah \thanks{
Simons Laufer Mathematical Sciences Institute (Mathematical Sciences Research Institute)} $^\text{\textdagger}$ \and Michael I. Jordan \thanks{University of California, Berkeley} }
\date{\today}
\maketitle

\begin{abstract}
{
We study the role of regulatory inspections in a contract design problem in which a principal interacts separately with multiple agents. Each agent's hidden action includes a dimension that determines whether they undertake an extra costly step to adhere to safety protocols. The principal's objective is to use payments combined with a limited budget for random inspections to incentivize agents towards safety-compliant actions that maximize the principal's utility.
We first focus on the single-agent setting with linear contracts and present an efficient algorithm that characterizes the optimal linear contract, which includes both payment and random inspection. We further investigate how the optimal contract changes as the inspection cost or the cost of adhering to safety protocols vary. Notably, we demonstrate that the agent's compensation increases if either of these costs escalates. However, while the probability of inspection decreases with rising inspection costs, it demonstrates nonmonotonic behavior as a function of the safety action costs.
Lastly, we explore the multi-agent setting, where the principal's challenge is to determine the best distribution of inspection budgets among all agents. We propose an efficient approach based on dynamic programming to find an approximately optimal allocation of inspection budget across contracts. We also design a random sequential scheme to determine the inspector's assignments, ensuring each agent is inspected at most once and at the desired probability. Finally, we present a case study illustrating that a mere difference in the cost of inspection across various agents can drive the principal's decision to forego inspecting a significant fraction of them, concentrating its entire budget on those that are less costly to inspect.
}
\end{abstract}

\section{Introduction}
The rapid growth of data-oriented applications has led to a surge in new products and services offered by companies that provide personalized user experiences. Alongside the benefits of personalization, however, there are growing concerns about potential unwanted side effects that are not necessarily revealed or declared by these companies. For example, many users wonder whether their data is stored securely every time they enter their sensitive information on an online platform, especially given that instances of data breaches and cyberattacks have become common.

In the realm of healthcare, the failure of companies to reveal negative and undesired side effects of drugs has had devastating consequences, as in the Vioxx case in the early 2000s and the opioid epidemic exacerbated by Purdue Pharma in the United States. These episodes demonstrate how inadequate disclosure and transparency can have severe consequences for public health.

Another concerning example pertains to the design practices of leading tech companies. There are valid concerns about whether they pay enough attention to safety, security, and reliability when creating new products and algorithms. Releasing cutting-edge technologies without adequate testing or safeguards raises questions about potential risks to users and the wider public.

Beyond simply documenting such concerns, the question arises as to how we might incentivize platforms to follow safety and security measures. In this work, we propose a mechanism design framework based on contract theory to provide such incentives, focusing on an incentive-producing role for \textit{inspections}. Building on the classical principal-agent model, where the agents (in this case, the platforms) take hidden and costly actions that result in a reward for the principal, we introduce an additional dimension to the model. Specifically, each agent must also decide whether to take a costly step to adhere to safety and security measures or disregard them. In the event that the agent neglects these measures, negative side effects may occur with a non-negligible probability, leading to adverse consequences for the principal.

The principal offers compensation to each agent based on the reward generated from the principal's action. Moreover, the principal retains the option to conduct a random and costly inspection, revealing whether the agent has complied with the safety and security measures without disclosing any information about the agent's underlying action. { In addition, although the principal maintains a separate contract with each agent, the collective design of these contracts is constrained by the principal's limited inspection budget; in particular the number of inspectors. In essence, the principal's objective is to identify the optimal set of contracts, consisting of payments and inspections, that maximizes its overall utility subject to its inspection budget.

Our work introduces a framework to model and analyze the role of regulatory and partial inspections in contract design. These inspections aim to confirm the agent's compliance with laws emphasizing societal considerations, such as safety. Our first set of results focuses on the design of linear contracts in a single-agent setting. We characterize the optimal probability of inspection as a function of the ratio of the reward paid to the agent, and show that it is a piecewise convex and decreasing function. Our detailed characterization enables us to design an algorithm that finds the optimal linear contract in quasi-linear time in the number of actions.

Furthermore, we determine that when the inspection becomes more costly for the principal, they tend to reduce the inspection probability but compensate the agent more to ensure they are motivated to adhere to safety regulations. Similarly, if the cost associated with observing safety protocols rises for the agent, the principal modifies the optimal contract by increasing the agent's compensation to guarantee adherence to safe practices. Surprisingly, in such scenarios, the optimal probability of inspection does not necessarily increase. In fact, it turns out that the increase in payment might even allow the principal to decrease the inspection level. This observation underscores that merely ramping up regulatory inspection isn't always the most effective strategy, especially in contexts where safety comes at a high cost.

Next, we turn our attention to the multi-agent setting. Here, due to their budget constraints, the principal cannot assign the optimal inspection level to each agent's contract and must thus determine the best allocation of inspections. We demonstrate that the principal's utility from each contract, when considered as a function of the maximum permissible inspection, is a piecewise concave and weakly increasing function. Further, we establish that the principal's problem in this context is closely related to the multiple-choice knapsack problem. We then introduce a dynamic-programming-based algorithm that finds an $\epsilon$-approximate solution, with a time complexity that is polynomial in terms of the number of agents, the number of actions, and $1/\epsilon$. We further use a random procedure to assign inspectors to agents, ensuring that each agent is inspected with the targeted probability of inspection. It's important to note that we cannot determine each inspector's schedule independently since each agent must be inspected no more than once. Accordingly, our procedure determines the assignment of inspectors sequentially. For every pair of consecutive inspectors, there is at most one common agent they might inspect. Nevertheless, we design each inspector's assignment distribution based on the preceding inspector to prevent overlaps and ensure that the desired inspection level is attained.

Finally, we present a case study illustrating an intriguing dynamic: even when dealing with identical agents who solely differ in terms of the cost incurred by the principal to inspect them, the principal may decide to abstain from inspecting a substantial proportion of those with higher inspection costs. This observation suggests that agents have the potential leverage to influence the principal's action by increasing the cost barriers to monitor their adherence to safety protocols. In fact, by doing so, they can dissuade the principal from monitoring them and hence increase their welfare.}
\subsection{Related Work}
{
Our model builds on the hidden-action principal-agent model \citep{grossman1992analysis}. Within this general framework, the study of costly state verification dates back to the work of \citet{townsend1979optimal, gale1985incentive} for debt contracts (see also chapter 5.3 in \cite{bolton2004contract} for discussion on costly verification or disclosure). 

Our paper aligns most closely with the literature on contract design with random monitoring \citep{jost1991monitoring, jost1996role, strausz1997delegation, barbos2022optimal}. In \cite{jost1991monitoring} the principal decides randomly to monitor the agent's action through a costly inspection, and it is argued that in the optimal contract, the principal either performs the inspection or pays the inspection cost to the agent. The work in \cite{jost1996role} considers a model where the principal has private information regarding their monitoring costs. In \cite{barbos2022optimal}, the inspection uncovers the agent's exact action with a certain probability, and in other cases, provides no information. Our work departs from the existing literature in three primary respects: 
First, we introduce a model of partial inspection in contract theory. In our framework, the inspection fully discloses adherence to safety standards but reveals no information about the other dimension of the agent' actions. Second, we establish the computational complexity of both equilibrium characterization and its comparative statics. And third, our results span both single-agent and multi-agent settings.

It is also worth mentioning that while our work focuses on one-period contract design, the optimal randomized inspection in dynamic contracts over time has been considered in the literature as well \citep{varas2020random, chen2020optimal, ball2023should, orlov2022frequent}.

In mechanism design, our paper relates to the literature on mechanism design with costly inspection or verification \citep{ben2014optimal, mylovanov2017optimal, li2020mechanism}. The focus in this line of work is on the principal's problem of allocating an object based on the reports from the agents of their private types, reports which are subject to potential inspection by the principal. These papers along with others such as \cite{erlanson2020costly, halac2020commitment} use inspection as a tool when monetary payments are not feasible; our work, on the other hand, allows for both payments and inspections. This brings our work closer to \cite{alaei2020optimal} in the mechanism design literature, where the assumption is that an auctioneer can use both payments and a (full yet deferred) inspection.

Our work is also related to the literature on partial or probabilistic verification in mechanism design \citep{green1986partially, ball2019probabilistic, caragiannis2012mechanism, ferraioli2018probabilistic}. In these models, an agent has a private type, and inspections can differentiate certain type pairs from each other but might be ineffective with others. Our safety inspection model can also be viewed in this model: pairs comprising a safe action and an unsafe action are discernible, whereas pairs consisting of two safe or two unsafe actions remain indistinct. In these works, however, inspections are presumed to be cost-free, and their primary focus is to characterize the class of social choice functions that can be implemented truthfully.

Our motivations are similar to those in the literature on regulatory inspections for incentivizing companies to adhere to standard policies \citep{harrington1988enforcement, choe1999compliance, ferraro2008asymmetric}. For instance, in \cite{harrington1988enforcement}, companies are partitioned into two groups. One group is inspected more frequently, and its members face steeper fines if found to be violating protocols. Adhering to standard protocols incurs a cost, and firms can be shifted from one group to another---either via a reward or a punishment---based on their performance.

Our work also relates to the literature on computational aspects of contract theory \citep{babaioff2006combinatorial, dutting2022combinatorial, dutting2021complexity, zhu2022sample}.  As in this line of work, we focus on the class of linear contracts, given their simplicity and interpretability, and the fact that linear contracts have been shown to be robust to the unknown actions \citep{carroll2015robustness} or unknown distributions \citep{dutting2019simple}.

}
\section{The Model}
We consider a multi-agent setting with $m$ agents and one principal. For agent $\ell \in [m]{:= \{1, \cdots, m\}}$, we use the notation $(a^\ell,s^\ell)$ to represent the agent's action, where $a^\ell \in \mathcal{A}^\ell = \{a^\ell_1, \cdots, a^\ell_n\}$ determines the effort the agent invests in providing its service,\footnote{Here, to simplify the notation, we assume all agents have $n$ actions. However, our analysis will remain the same when they have different numbers of actions.} and $s^\ell \in \{0,1\}$ indicates whether the agent considers safety measures. Accordingly, we call actions with $s^\ell=1$ and $s^\ell=0$ as \textit{safe} and \textit{unsafe} actions, respectively. 
For any $i \in [n]$, the cost of action $a^\ell_i$ is denoted by $c^\ell_i \geq 0$. Additionally, the cost of complying with safety measures (i.e., $s^\ell=1$) is denoted by $\kappa^\ell_S \geq 0$. Hence, the total cost of action $(a^\ell_i,s^\ell)$ is given by $c^\ell_i + \mathbbm{1}_{s^\ell=1} \kappa^\ell_S$.

When agent $\ell$ takes action $(a^\ell,s^\ell)$, a random reward $r^\ell \in \mathcal{R}^\ell \subseteq \mathbb{R}^{\geq 0} \cup \{-\infty\}$ accrues to the principal. Specifically, if the agent takes action $(a^\ell,1)$ which includes adherence to safety protocols, a nonnegative reward $r^\ell$ is generated with probability $f^\ell(r^\ell|a^\ell)$. Conversely, if the agent takes action $(a^\ell,0)$, neglecting safety measures, there is a probability $\alpha^\ell$ that negative side effects occur, leading to a reward of $-\infty$. With the complementary probability, these side effects do not materialize, and a nonnegative reward $r^\ell$ is generated with probability $f^\ell(r^\ell|a^\ell)$, similar to the case where the agent followed the safety measures. 

The contract between each agent and the principal comprises two elements. The principal, upon observing the reward $r^\ell$ from agent $\ell$, compensates them for their action by paying them $t^\ell(r^\ell) \geq 0$ with the condition $t^\ell(-\infty) = 0$.  In other words, the principal pays nothing if side effects occur. Additionally, the principal has the option to perform an inspection, with probability $\beta^\ell$ and at a cost of $\kappa^\ell_I$, before the reward realization, which reveals whether the agent adhered to safety measures or not, i.e., it reveals the value of $s^\ell$. What connects all of the contracts is the fact that the principal has { a limited} capacity for only { $B \in \mathbb{Z}^+$} unit of inspection, meaning that the condition $\sum_{\ell=1}^m \beta^\ell \leq { B}$ should hold {(one can think of this as having $B$ inspectors available)}. We also assume that the two events of inspection and occurrence of side effects are independent. 

We denote the expected reward to the principal and the expected payment to the $\ell$-th agent as a result of agent $\ell$ taking action $(a_i^\ell,1)$ by $R^\ell_i$ and $T^\ell_i$, respectively, i.e., 
\begin{equation}
R^\ell_i := \mathbb{E}_{r \sim f^\ell(.|a^\ell_i)}[r]
\quad \text{and} \quad 
T^\ell_i := \mathbb{E}_{r \sim f^\ell(.|a^\ell_i)}[t^\ell(r)].
\end{equation}
\subsection{Implementable Actions}
We denote agent $\ell$'s expected utility when taking action $(a^\ell_i,s^\ell)$ by $\mathcal{U}^\ell_a(a^\ell_i,s^\ell)$, which is defined as follows:
\begin{equation}
\mathcal{U}^\ell_a(a^\ell_i,s^\ell) = 
\begin{cases}
T^\ell_i - c^\ell_i - \kappa^\ell_S & s^\ell=1, \\  
(1-\beta^\ell)(1-\alpha^\ell) T^\ell_i - c^\ell_i & s^\ell=0.
\end{cases}
\end{equation}
We say an action $(a^\ell,s^\ell)$ is \textit{implementable} for agent $\ell$ if there exists a contract, consisting of $(t^\ell(\cdot), \beta^\ell)$, such that the following two conditions hold:
\begin{itemize}
\item \textit{Incentive compatibility (IC)}: the agent has no incentive to deviate and choose another action, i.e., 
$\mathcal{U}^\ell_a(a^\ell,s^\ell) \geq \mathcal{U}^\ell_a(a',s')$, for any other $a' \in \mathcal{A}^\ell$ and $s' \in \{0,1\}$.
\item \textit{Individual rationalism (IR)}: the agent is not better off by not taking the contract at all, i.e., $\mathcal{U}^\ell_a(a^\ell,s^\ell) \geq 0 $.
\end{itemize}
In other words, IC and IR together ensure that, the best response of the agent to the offered contract is to take the action $(a^\ell,s^\ell)$. We let $\mathcal{I}^\ell$ denote the set of implementable actions for agent $\ell$. 

It is worth noting that having $(a^\ell,s^\ell) \in \mathcal{I}^\ell$ for all $\ell \in [m]$ does not necessarily imply that the tuple of actions $((a^\ell,s^\ell))_{\ell=1}^m$ is implementable for all agents simultaneously, as we have not yet taken into account the constraint $\sum_{\ell=1}^m \beta^\ell \leq B$. To this end, we also say a tuple of actions $((a^\ell,s^\ell))_{\ell=1}^m$ is \textit{fully implementable} (and denote the set of such tuples by $\mathcal{I}$) if, for any $\ell$, $(a^\ell,s^\ell)$ is implementable by some contract $(t^\ell(\cdot), \beta^\ell)$ such that $\sum_{\ell=1}^m \beta^\ell \leq B$.
\subsection{The Principal's Problem}
Let us denote the principal's expected utility from agent $\ell$ when that agent takes action $(a^\ell_i,s^\ell)$ by $\mathcal{U}^\ell_p(a^\ell_i,s^\ell)$. For $s^\ell=0$, this utility is $-\infty$ as the side effects arise with a non-zero probability and lead to a reward of $-\infty$. Hence, the principal would strictly prefer safe actions over unsafe ones. For the case of $s^\ell=1$, the expected utility of principal is given by
\begin{equation}
\mathcal{U}^\ell_p(a^\ell_i,1) = 
R^\ell_i - T^\ell_i - \beta^\ell \kappa^\ell_I.
\end{equation}
In addition, we denote the total expected utility of the principal when agents take actions $((a_i^\ell,s_i^\ell))_{\ell=1}^m$ by $\mathcal{U}_p \left (((a_i^\ell,s_i^\ell))_{\ell=1}^m \right)$, defined as follows:
\begin{equation}
\mathcal{U}_p \left (((a_i^\ell,s_i^\ell))_{\ell=1}^m \right ) =
\begin{cases}
\sum_{\ell=1}^m \mathcal{U}_p^\ell(a^\ell_i,1) 
& \text{ if } s^\ell_i=1 \text{ for all } \ell, \\
-\infty & \text{ otherwise.}
\end{cases} 
\end{equation}

Now, the principal's problem can be seen as designing a contract that incentivizes agents to play the tuple of actions that maximizes her expected utility among all implementable tuples of actions. In other words, the principal's problem can be cast as finding the contract that implements the solution to the following maximization problem:
\begin{align}
\max_{((a_i^\ell,s_i^\ell))_{\ell=1}^m} & \mathcal{U}_p \left (((a_i^\ell,s_i^\ell))_{\ell=1}^m \right)  \\
\text{s.t.} & \quad ((a_i^\ell,s_i^\ell))_{\ell=1}^m \in \mathcal{I}. \nonumber 
\end{align}

We make the following assumptions throughout the paper:
\begin{assumption} \label{assumption:no-dominant}
For every agent, we assume different actions have different costs, and moreover, an action with a higher cost also has a higher expected reward. Also, we assume that no two actions of any given agent have the same cost.      
\end{assumption}
Assumption \ref{assumption:no-dominant} ensures that no action dominates another one, meaning that it leads to a higher expected reward at a lower cost. Under this assumption, and without loss of generality, we assume $c^\ell_1 < c^\ell_2 < \cdots < c^\ell_n$ and $R^\ell_1 < R^\ell_2 < \cdots < R^\ell_n$ for every $\ell \in [m]$.  
\begin{assumption} \label{assumption:safety_feasible}
For every agent $\ell$, we have $\max_{i} (R^\ell_i - c^\ell_i) > \kappa^\ell_S$.    
\end{assumption}
Note that if this assumption does not hold for an agent, it implies that, even if that agent receives the entire reward as payment, their utility would remain nonpositive for any safe action. In simpler terms, this assumption guarantees there is a way to make at least one safe action implementable for every agent.

Our focus is on linear contracts, which are contracts for which the agent's payment is directly proportional to the reward received by the principal. General contracts can often be complex and challenging to understand or implement, while linear contracts offer a simpler and more practical alternative. Moreover, under reasonable assumptions, linear contracts are known to be worst-case optimal \cite{dutting2019simple}.  In our setting, focusing on linear contracts allow us to better isolate the role of partial inspection in contract design over a useful and intuitive class of contracts. 

More formally, we consider payment function $t^\ell(r) = \gamma^\ell r$, for some $\gamma^\ell \in [0,1]$ chosen by the principal. Hence, the principal has two sets of parameters to choose: $\gamma^\ell$, the ratio of the reward paid to agent $\ell$, and $\beta^\ell$, the probability of performing inspection regarding the safety of agent $\ell$'s action. Note also that in this case, $T^\ell_i = \gamma^\ell R^\ell_i$.

\section{The Single-Agent Setting}
To gain a better understanding of the nature of the problem, we begin with the single-agent setting, i.e., $m=1$. {Without loss of generality, we assume $B=1$ in this case.} To simplify the notation, we drop the superscripts throughout this section. 

Let us first consider what happens if the principal is not allowed to do the inspection, i.e., $\beta$ is set to $0$. In this case, the principal should intuitively offer a higher payment ratio to persuade the agent to adhere to safety protocols. However, as the following result shows, this may not be enough.
\begin{lemma} \label{lemma:inspection_needed}
If $\alpha < \frac{\kappa_S}{R_n}$, then there is no safe action that is implementable by a linear contract without inspection.     
\end{lemma}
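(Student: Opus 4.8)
The plan is to argue by contradiction, isolating the single incentive constraint that compares a safe action to the \emph{unsafe action at the same effort level}. Suppose some safe action $(a_i,1)$ were implementable by a linear contract $t(r)=\gamma r$ with $\gamma\in[0,1]$ and no inspection, so $\beta=0$. Using the definition of $\mathcal{U}_a$ and substituting $T_i=\gamma R_i$ and $\beta=0$, the agent's utility from $(a_i,1)$ is $\gamma R_i - c_i - \kappa_S$, while the utility from the deviation $(a_i,0)$ is $(1-\alpha)\gamma R_i - c_i$.

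The next step is to invoke incentive compatibility, which in particular requires $\mathcal{U}_a(a_i,1)\ge \mathcal{U}_a(a_i,0)$ since $(a_i,0)$ is an admissible alternative (the safety coordinate $s'\in\{0,1\}$ is unrestricted in the IC condition). Cancelling the common $-c_i$ term, this reduces to $\gamma R_i - \kappa_S \ge (1-\alpha)\gamma R_i$, i.e. $\alpha\,\gamma\, R_i \ge \kappa_S$.

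Finally I would bound the left-hand side from above: since $\gamma\le 1$ and $R_i\le R_n$ by Assumption~\ref{assumption:no-dominant}, we obtain $\alpha R_n \ge \alpha\,\gamma\, R_i \ge \kappa_S$, hence $\alpha \ge \kappa_S/R_n$, contradicting the hypothesis $\alpha < \kappa_S/R_n$. (I would remark that the hypothesis already forces $\kappa_S>0$ and therefore $R_n>0$, so the division is legitimate and the chain of inequalities is not vacuous.) Since this holds for every index $i$, no safe action is implementable without inspection.

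I do not anticipate a genuine obstacle: the whole content is the observation that, absent inspection, the only force discouraging the agent from skipping the safety step at a fixed effort is the probability $\alpha$ of losing the reward-based payment, and this force is capped by $\alpha R_n$ even when the agent is handed the entire reward ($\gamma=1$). The only points requiring a little care are verifying that the same-effort unsafe deviation is always available in the IC system (so that we may use just this one inequality rather than the full constraint set) and confirming the edge cases on $\kappa_S$ and $R_n$ noted above.
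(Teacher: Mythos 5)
Your proof is correct and follows exactly the paper's argument: compare the safe action $(a_i,1)$ to the same-effort unsafe deviation $(a_i,0)$ under $\beta=0$, cancel the cost term, and bound $\alpha\gamma R_i$ by $\alpha R_n$ to reach the contradiction. The extra remark about the edge cases $\kappa_S>0$ and $R_n>0$ is a harmless refinement not present in the paper.
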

{The proof (along with other omitted proofs) can be found in the appendix.} This result shows that, when the negative side effects are rare enough, the principal would need to perform the inspection to keep the agent committed to the safety measures. Otherwise, when the occurrence probability of side effect is small enough, the agent takes a chance in not abiding the safety protocols. 
Next, we continue by characterizing the properties of the linear contract. By IC constraint, if a linear contract $(\gamma, \beta)$ implements the safe action $(a_i,1)$, then $\mathcal{U}_a(a_i,1) \geq \mathcal{U}_a(a_j,1)$ for any $j \neq i$. This simplifies to $\gamma R_i - c_i \geq \gamma R_j -c_j$ for any $j \neq i$. 

Now, inspired by \cite{dutting2019simple}, we develop the following geometric characterization: for any $i \in [n]$, let us define the linear function $h_i:[0,1] \to \mathbb{R}$ as $h_i(\gamma) = \gamma R_i -c_i$. As we stated above, if the action $(a_i,1)$ is implementable by $(\gamma, \beta)$, then $h_i(\gamma) \geq h_j(\gamma)$ for any $j \neq i$. As a result, to find the set of implementable actions, we need to characterize the upper envelope of the set of functions $\{h_j(\cdot)\}_{j=1}^n$, denoted by $u_h(\cdot)$, which is a piecewise linear and increasing function (see Figure \ref{fig:upperenvelope} for an example).
\begin{figure}
\centering
\begin{minipage}{.5\textwidth}
  \centering
  \begin{tikzpicture}
    \begin{axis}[
        xlabel=\(\gamma\),
        domain=0:1,
        legend pos=north west,
        ytick=\empty,     
        xtick=\empty,     
    ]
        \addplot[red, domain=0:0.1, thick] {x*1 - 0.1};
        \addplot[orange, domain=0.1:0.4, thick] {x*3 - 0.3};
        \addplot[green, domain=0.4:0.7, thick] {x*4 - 0.7};
        \addplot[blue, domain=0.7:1, thick] {x*6 - 2.1};
        \addlegendentry{\( h_1(\gamma) := R_1\gamma - c_1 \)}
        \addlegendentry{\( h_2(\gamma) := R_2\gamma - c_2 \)}
        \addlegendentry{\( h_3(\gamma) := R_3\gamma - c_3 \)}
        \addlegendentry{\( h_4(\gamma) := R_4\gamma - c_4 \)}
        \addplot[red, domain=0.1:1, dashed] {x*1 - 0.1};
        \addplot[orange, domain=0:0.05, dashed] {x*3 - 0.3};
        \addplot[orange, domain=0.4:1, dashed] {x*3 - 0.3};
        \addplot[green, domain=0:0.4, dashed] {x*4 - 0.7};
        \addplot[green, domain=0.7:1, dashed] {x*4 - 0.7};
        \addplot[blue, domain=0.0:0.7, dashed] {x*6 - 2.1};
        
    \end{axis}
\end{tikzpicture}
  \captionof{figure}{Illustration of $u_h(\cdot)$}
  \label{fig:upperenvelope}
\end{minipage}%
\begin{minipage}{.5\textwidth}
  \centering
  \begin{tikzpicture}
    \begin{axis}[
        xlabel=\(\gamma\),
        domain=0:1,
        ytick=\empty,     
        xtick=\empty,     
    ]
        \addplot[red, domain=0:0.1, thick] {x*1 - 0.1};
        \addplot[orange, domain=0.1:0.4, thick] {x*3 - 0.3};
        \addplot[green, domain=0.4:0.7, thick] {x*4 - 0.7};
        \addplot[blue, domain=0.7:1, thick] {x*6 - 2.1};
        
        \node[label={left:{\tiny $\gamma_1$}},circle,fill,inner sep=1pt] at (axis cs:0,-0.1) {};
        \node[label={right:{\tiny $\gamma_2$}},circle,fill,inner sep=1pt] at (axis cs:0.1,0) {};
        \node[label={right:{\tiny $\gamma_3$}},circle,fill,inner sep=1pt] at (axis cs:0.4,0.9) {};
        \node[label={left:{\tiny $\gamma_4$}},circle,fill,inner sep=1pt] at (axis cs:0.7,2.1) {};
        \node[label={right:{\tiny $\gamma_5$}},circle,fill,inner sep=1pt] at (axis cs:1,3.9) {};
        
        \node[label={right:{\small $(\gamma, u_h(\gamma))$}},circle,fill,inner sep=1pt] at (axis cs:0.75,2.4) {};

        \draw[->](axis cs:0.75,2.4) |- (axis cs:0.525,1.4);
        \node[label={right:{\small $\kappa_S$}}] at (axis cs:0.73,1.8) {};
        \node[label={left:{\small $(1-\beta)(1-\alpha)\gamma$}}] at (axis cs:0.525,1.4) {};

    \end{axis}
\end{tikzpicture}
  \captionof{figure}{How to characterize $\beta(\gamma)$}
  \label{fig:findbeta}
\end{minipage}
\end{figure}
Each segment of $u_h(\cdot)$ corresponds to a specific $h_i(\cdot)$, representing the dominant function within that segment. This implies that, in that segment, $(a_i,1)$ stands as the sole implementable safe action (for the $\gamma$ values pertaining to that segment), with no incentive to divert to an alternative \textit{secure} action. Notice that we have not yet considered the deviation to unsafe actions or the IR constraint.   

Next, using this derivation, we establish the following result on implementable safe actions and their corresponding linear contracts:
\begin{proposition} \label{prop:upper_envelope}
Suppose Assumptions \ref{assumption:no-dominant} and \ref{assumption:safety_feasible} hold. There exist $0=\gamma_0 < \gamma_1 < \gamma_2 < \cdots < \gamma_k < \gamma_{k+1} = 1$ such that the following holds:
\begin{enumerate}
\item No safe action is implementable for $\gamma < \gamma_1$. 
\item There exists a set of actions $i_1 < \cdots i_k$ such that, for any $j \in [k]$, the action $(a_{i_j},1)$ is implementable with $\gamma \in [\gamma_j, \gamma_{j+1}]$. Moreover, $(a_{i_j},1)$ is the only implementable safe action for $\gamma \in (\gamma_j, \gamma_{j+1})$.
\item For any $\gamma \geq \gamma_1$, there exists $\beta(\gamma)$ such that $(\gamma, \beta)$ implements a safe action if and only if $\beta \geq \beta(\gamma)$.
\end{enumerate}
\end{proposition}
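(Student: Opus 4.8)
The plan is to reduce all three claims to monotonicity and continuity properties of the upper envelope $u_h$, after reformulating implementability purely in terms of $u_h$. Since $t(r)=\gamma r$ gives $T_i=\gamma R_i$, the agent's payoff from a safe action $(a_i,1)$ is $h_i(\gamma)-\kappa_S$, while its payoff from an \emph{unsafe} action $(a_j,0)$ is $(1-\beta)(1-\alpha)\gamma R_j - c_j = h_j\big((1-\beta)(1-\alpha)\gamma\big)$. The key observation is that the agent's most profitable unsafe deviation therefore equals $\max_j h_j\big((1-\beta)(1-\alpha)\gamma\big) = u_h\big((1-\beta)(1-\alpha)\gamma\big)$ — the \emph{same} envelope, evaluated at the discounted ratio $(1-\beta)(1-\alpha)\gamma \le \gamma$. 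Consequently, a linear contract $(\gamma,\beta)$ implements $(a_i,1)$ if and only if: (i) $h_i(\gamma)=u_h(\gamma)$ (no profitable move to another safe action); (ii) $u_h(\gamma)-\kappa_S \ge u_h\big((1-\beta)(1-\alpha)\gamma\big)$ (no profitable move to an unsafe action); and (iii) $u_h(\gamma)\ge \kappa_S$ (individual rationality). I expect identity (ii) to be the only genuinely non-obvious step; everything afterward is bookkeeping with the facts that $u_h$ is continuous, non-decreasing, convex and piecewise linear on $[0,1]$, with finitely many breakpoints splitting $[0,1]$ into intervals on each of which a single line $h_i$ is the maximizer, and — because $u_h$ is convex and the slopes $R_1<\cdots<R_n$ are distinct (Assumption \ref{assumption:no-dominant}) — these maximizing indices strictly increase with $\gamma$.

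\textbf{Claim 1.} I would set $\gamma_1 := \inf\{\gamma\in[0,1] : u_h(\gamma)\ge\kappa_S\}$. Since $u_h(1)=\max_i(R_i-c_i)>\kappa_S$ by Assumption \ref{assumption:safety_feasible} and $u_h$ is continuous and non-decreasing, we get $\gamma_1<1$ and (away from the degenerate case $c_1=\kappa_S=0$, in which $\gamma_1=0$ and the claim is vacuous) $u_h(\gamma_1)=\kappa_S$ with $u_h(\gamma)<\kappa_S$ for $\gamma<\gamma_1$. For such $\gamma$, condition (i) forces any implementable safe action to be an envelope maximizer, whose payoff $u_h(\gamma)-\kappa_S$ is negative, contradicting (iii); hence no safe action is implementable below $\gamma_1$.

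\textbf{Claims 2 and 3.} Let $\gamma_2<\cdots<\gamma_k$ be the breakpoints of $u_h$ in $(\gamma_1,1)$, put $\gamma_0=0$, $\gamma_{k+1}=1$, and let $i_j$ be the action whose line realizes $u_h$ on $(\gamma_j,\gamma_{j+1})$; by the structural facts above, $i_1<\cdots<i_k$. On the open interval $(\gamma_j,\gamma_{j+1})$ the envelope maximizer is unique, so by (i) the only safe action that can be implementable is $(a_{i_j},1)$; and it \emph{is} implementable, because $\gamma\ge\gamma_1$ makes (iii) hold and $\beta=1$ makes the right-hand side of (ii) equal $u_h(0)=-c_1\le 0\le u_h(\gamma)-\kappa_S$, so (ii) holds too. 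At the closed endpoints $\gamma_j,\gamma_{j+1}$ the line $h_{i_j}$ still realizes $u_h$ (now tied with a neighboring line, which does not change the agent's attainable payoff), so $(a_{i_j},1)$ remains implementable there; this gives Claim 2. For Claim 3, fix $\gamma\ge\gamma_1$ and define $\phi(\beta):=u_h\big((1-\beta)(1-\alpha)\gamma\big)$, which is continuous and non-increasing in $\beta$ on $[0,1]$. By the reformulation, for $\gamma\ge\gamma_1$ (where (iii) holds automatically), $(\gamma,\beta)$ implements a safe action if and only if $\phi(\beta)\le u_h(\gamma)-\kappa_S$ — condition (i) being met by the relevant $i_j$, and conditions (ii)–(iii) not depending on the choice of the action. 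The set $\{\beta\in[0,1] : \phi(\beta)\le u_h(\gamma)-\kappa_S\}$ is nonempty (it contains $1$) and closed (continuity of $\phi$), hence equals $[\beta(\gamma),1]$ for some $\beta(\gamma)$, which is exactly Claim 3. The only delicate points are the treatment of breakpoint endpoints and the degenerate case noted above; the bulk of the work is verifying identity (ii) and the standard structural description of $u_h$.
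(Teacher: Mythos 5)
Your proposal is correct and follows essentially the same route as the paper's proof: the same reduction of implementability to the three envelope conditions (in particular the key identity that the best unsafe deviation is $u_h\big((1-\beta)(1-\alpha)\gamma\big)$), the same definition of $\gamma_1$ via $u_h^{-1}(\kappa_S)$, and the same use of the breakpoints of the piecewise-linear convex envelope for the $\gamma_j$'s and $i_j$'s. If anything, your treatment of Claim 3 is slightly more explicit than the paper's, since you derive the threshold structure $[\beta(\gamma),1]$ from the monotonicity and continuity of $\phi(\beta)$ rather than only exhibiting one sufficiently large $\beta$.
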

\begin{proof}
First, note that, in order for a safe action $(a,1)$ to be implementable, IR should hold as well. Hence, if $u_h(\gamma) \leq \kappa_S$, no safe action would be implementable by $\gamma$. As a result, no safe action is implementable for $\gamma$ below 
\begin{equation}
\gamma_1 := (u_h)^{-1}(\kappa_S).   
\end{equation}
Also, note that $u_h(1) = \max_j (R_j - c_j)$, which by assumption is greater than $\kappa_S$. Therefore, $\gamma_1 < 1$.

Now, let us focus on $\gamma \geq \gamma_1$. As we stated earlier, $u_h(\cdot)$ is a piecewise linear and increasing function. Hence, there exist $i_1 < \cdots i_k$ and $\gamma_1 < \gamma_2 < \cdots < \gamma_k < \gamma_{k+1} = 1$ such that, for any $j \in [k]$, we have
\begin{equation}
u_h(\gamma) = h_{i_j}(\gamma) \text{ for any }
\gamma \in [\gamma_j, \gamma_{j+1}].
\end{equation}
In other words, on segment $[\gamma_j, \gamma_{j+1}]$, $u_h(\cdot)$ is equal to $h_{i_j}(\cdot)$. Hence, for $\gamma \in [\gamma_j, \gamma_{j+1}]$ the only safe action that can potentially be implemented is $(a_{i_j},1)$. What remains is to rule out deviation to unsafe actions. Recall that the agent's utility from an unsafe action $(a_v, 0)$ is given by 
\begin{equation}
(1-\beta)(1-\alpha) \gamma R_v - c_v   
\end{equation}
which is, in fact, $h_v \big ((1-\beta)(1-\alpha) \gamma \big)$. As a consequence, the maximum utility that an agent can obtain from an unsafe action is given by $\max_{v} h_v \big ((1-\beta)(1-\alpha) \gamma \big)$ which is $u_h\big ((1-\beta)(1-\alpha) \gamma \big)$. Now, the IC constraint would require us to have
\begin{equation} \label{eqn:IC_unsafe}
u_h(\gamma) - \kappa_S  \geq   u_h\big ((1-\beta)(1-\alpha) \gamma \big).
\end{equation}
Notice that the left-hand side is nonnegative, since $\gamma \geq \gamma_1$. Also, $u_h$ is an increasing function, which starts from a negative value, i.e., $u_h(0)= \max_j (-c_j)$. Hence, we can choose $\beta$ large enough such that \eqref{eqn:IC_unsafe} holds (see Figure \ref{fig:findbeta} for an illustration). 
\end{proof}
Note that, for any $j \in [k]$, any $\gamma \in [\gamma_j, \gamma_{j+1}]$ with $\beta \geq \beta(\gamma)$ makes action $(a_{i_j},1)$ implementable for the agent. Since increasing $\beta$ would only decrease the principal's utility, without loss of generality, we could assume the principal picks $\beta = \beta(\gamma)$. Next, we characterize how this probability of inspection $\beta(\gamma)$ changes as a function of $\gamma$. 
\begin{lemma}\label{lemma:gamma_increasing}
Suppose Assumptions \ref{assumption:no-dominant} and \ref{assumption:safety_feasible} hold, and recall the definition of $\beta(\gamma)$ from Proposition \ref{prop:upper_envelope}. Then, $\beta(\gamma)$ is a decreasing function of $\gamma$. Moreover, it is strictly decreasing when $\beta(\gamma) > 0$.     
\end{lemma}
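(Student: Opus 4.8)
The plan is to start from the description of $\beta(\gamma)$ that is implicit in the proof of Proposition~\ref{prop:upper_envelope}. Fix $\gamma \geq \gamma_1$ and vary $\beta$: when implementing the safe action $(a_{i_j},1)$, the only requirement that is not automatically satisfied is the no-deviation-to-unsafe constraint \eqref{eqn:IC_unsafe}, namely $u_h(\gamma)-\kappa_S \geq u_h\bigl((1-\beta)(1-\alpha)\gamma\bigr)$. Write $\lambda(\beta) := (1-\beta)(1-\alpha) \in [0,1]$, which is continuous and strictly decreasing in $\beta$, and set $\Psi(\gamma,\beta) := u_h(\gamma) - u_h\bigl(\lambda(\beta)\gamma\bigr) - \kappa_S$. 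Since $u_h$ is nondecreasing and $\lambda(\beta)\gamma$ is decreasing in $\beta$, the map $\Psi(\gamma,\cdot)$ is continuous and nondecreasing; and $\Psi(\gamma,1) = u_h(\gamma) - u_h(0) - \kappa_S = u_h(\gamma) + c_1 - \kappa_S \geq 0$ because $u_h(\gamma)\geq u_h(\gamma_1)=\kappa_S$ and $c_1\geq 0$ (recall $u_h(0)=\max_j(-c_j)=-c_1$). Hence $\beta(\gamma) = \min\{\beta\in[0,1] : \Psi(\gamma,\beta)\geq 0\}$ is well defined, and whenever $\beta(\gamma)>0$ continuity forces $\Psi(\gamma,\beta(\gamma))=0$.

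The crux of the argument --- and the step I expect to be the main obstacle --- is the following monotonicity fact about the upper envelope: \emph{for every fixed $\lambda\in[0,1]$, the map $\gamma\mapsto u_h(\gamma)-u_h(\lambda\gamma)$ is nondecreasing on $[0,1]$.} This is exactly where the geometry of $u_h$ enters: as the upper envelope of the affine functions $h_i$, $u_h$ is convex, and it is also nondecreasing since all the $R_i$ are nonnegative. With these two properties the fact reduces to the elementary statement that, for a convex nondecreasing $F$ and points $p_1\leq q_1$, $p_2\leq q_2$ with $q_2-q_1\geq p_2-p_1$, one has $F(q_2)-F(q_1)\geq F(p_2)-F(p_1)$; applying this with $(p_1,p_2,q_1,q_2)=(\lambda\gamma,\lambda\gamma',\gamma,\gamma')$ for $\gamma<\gamma'$ yields the claim. (The elementary statement follows by first using monotonicity of $F$ to replace $q_2$ by $q_1+(p_2-p_1)$, and then comparing the two equal-length secant increments of $F$ starting at $q_1$ and at $p_1\leq q_1$ via the monotonicity of secant slopes of a convex function.)

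Given this fact, weak monotonicity of $\beta(\gamma)$ is immediate: for $\gamma_1\leq\gamma<\gamma'$, if $\Psi(\gamma,\beta)\geq 0$ then $\Psi(\gamma',\beta)=\bigl(u_h(\gamma')-u_h(\lambda(\beta)\gamma')\bigr)-\kappa_S\geq\bigl(u_h(\gamma)-u_h(\lambda(\beta)\gamma)\bigr)-\kappa_S=\Psi(\gamma,\beta)\geq 0$, so $\{\beta:\Psi(\gamma',\beta)\geq 0\}\supseteq\{\beta:\Psi(\gamma,\beta)\geq 0\}$ and therefore $\beta(\gamma')\leq\beta(\gamma)$. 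For the strict version, suppose $\beta(\gamma)>0$ and, toward a contradiction, that $\beta(\gamma)=\beta(\gamma')=:\beta_0$ for some $\gamma<\gamma'$; then by the continuity remark $u_h(\gamma)-u_h(\lambda_0\gamma)=\kappa_S=u_h(\gamma')-u_h(\lambda_0\gamma')$ with $\lambda_0:=\lambda(\beta_0)<1$, hence $u_h(\gamma')-u_h(\gamma)=u_h(\lambda_0\gamma')-u_h(\lambda_0\gamma)=:V\geq 0$. The left side is a secant increment of $u_h$ over an interval of length $\gamma'-\gamma$ and the right side one over the strictly shorter interval $[\lambda_0\gamma,\lambda_0\gamma']$, which lies weakly to its left; convexity of $u_h$ then gives $V/(\gamma'-\gamma)\geq V/\bigl(\lambda_0(\gamma'-\gamma)\bigr)$, which forces $V=0$ since $\lambda_0<1$. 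Thus $u_h$ is constant on $[\gamma,\gamma']$. But $u_h$ is strictly increasing on $[\gamma_1,1]$: $u_h$ has a zero-slope piece only if $R_1=0$, in which case that piece is the initial one (where $h_1$ dominates) and sits at height $-c_1\leq 0<\kappa_S=u_h(\gamma_1)$, hence lies strictly to the left of $\gamma_1$. This contradiction gives $\beta(\gamma')<\beta(\gamma)$, establishing that $\beta(\cdot)$ is strictly decreasing wherever it is positive.
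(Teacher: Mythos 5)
Your proof is correct, and it takes a genuinely different route from the paper's. The paper starts from the closed form $\beta(\gamma) = \max\{1 - f(\gamma),0\}$ with $f(\gamma) = u_h^{-1}(u_h(\gamma)-\kappa_S)/(\gamma(1-\alpha))$, observes that $f$ is continuous and piecewise differentiable, computes $f'$ explicitly, and shows $f'>0$ on each differentiable piece using $\tilde\gamma<\gamma$ together with the monotonicity of $u_h'$ (convexity of the envelope); strict monotonicity of $f$ then follows by patching the pieces together via continuity. You instead never invert $u_h$ or differentiate anything: you characterize $\beta(\gamma)$ implicitly as the smallest root of $\Psi(\gamma,\cdot)\geq 0$ and reduce everything to the comparison-of-increments inequality for a convex nondecreasing function, which gives weak monotonicity by a clean set-inclusion argument and strict monotonicity by showing that equality would force $u_h$ to be flat on $[\gamma,\gamma']$, impossible to the right of $\gamma_1$. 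The same structural fact (convexity of $u_h$) drives both arguments, but yours is derivative-free and sidesteps the bookkeeping at the kinks of $u_h$, while the paper's calculus route has the side benefit that the explicit formula for $f'$ is reused later (in Lemma \ref{lemma:gamma_convex}, Theorem \ref{theorem:computaion_singlaAgent}, and Algorithm \ref{alg:optimal_singleAgent}). Two small points worth tightening: in the strict part you divide by $\lambda_0(\gamma'-\gamma)$, which degenerates if $\lambda_0=0$ (but then $V=0$ trivially, so the conclusion stands); and your final contradiction uses $\kappa_S>0$, which is harmless since for $\kappa_S=0$ one has $\beta\equiv 0$ and the strict claim is vacuous.
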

This result formalizes an intuitive observation: if the principal aims to reduce agent's payment, they should increase the inspection probability; conversely, if the principal wishes to avoid expensive inspections, they should offer higher compensation to the agent, encouraging adherence to safety protocols.

A natural question arises at this point: how should the principal determine the optimal trade-off between the agent's payment and the cost of inspection? Let $\gamma$ fall within the interval $[\gamma_j, \gamma_{j+1}]$ for some $j$. Recall that, in this case, the principal's utility is given by $(1-\gamma)R_{i_j} - \beta(\gamma) \kappa_I$. Consequently, the marginal cost associated with increasing the agent's payment is $R_{i_j}$, while the marginal cost of inspection is $\kappa_I$. The next result helps us to find the appropriate $\gamma$ that balances this trade-off.
\begin{lemma} \label{lemma:gamma_convex}
Under the premise of Proposition \ref{prop:upper_envelope}, and for any $j \in [k]$, $\beta(\gamma)$ is a convex function over the interval $[\gamma_j, \gamma_{j+1}]$. Moreover, it is strictly convex when $\beta(\gamma) > 0$.   
\end{lemma}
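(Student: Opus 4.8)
The plan is to convert the implicit definition of $\beta(\gamma)$ into an explicit piecewise formula and then read convexity off it. Recall from the proof of Proposition~\ref{prop:upper_envelope} that $\beta(\gamma)$ is the smallest $\beta\ge0$ satisfying $u_h(\gamma)-\kappa_S\ge u_h\big((1-\beta)(1-\alpha)\gamma\big)$. Since $u_h$ is continuous and increasing and, for $\gamma\ge\gamma_1$, the value $u_h(\gamma)-\kappa_S$ lies in its range, this is equivalent to
\[
\beta(\gamma)=\max\!\Big(0,\ 1-\frac{w(\gamma)}{(1-\alpha)\gamma}\Big),\qquad w(\gamma):=u_h^{-1}\big(u_h(\gamma)-\kappa_S\big),
\]
where $w$ is continuous and increasing in $\gamma$ and $w(\gamma)\le u_h^{-1}(u_h(\gamma))=\gamma$ because $\kappa_S\ge0$. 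As a pointwise maximum of convex functions is convex, it suffices to show $g(\gamma):=1-w(\gamma)/((1-\alpha)\gamma)$ is convex on $[\gamma_j,\gamma_{j+1}]$ and strictly convex on the set where $\beta(\gamma)>0$.

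Fix $j$. On $[\gamma_j,\gamma_{j+1}]$ we have $u_h(\gamma)=\gamma R_{i_j}-c_{i_j}$, so $u_h(\gamma)-\kappa_S$ is affine in $\gamma$. Partition $[\gamma_j,\gamma_{j+1}]$ according to which linear piece of the convex piecewise-linear envelope $u_h$ contains $w(\gamma)$; monotonicity and continuity of $w$ make this a finite partition into subintervals along which the active envelope piece moves rightward. On a subinterval where the envelope locally equals $\gamma'\mapsto\gamma'R_m-c_m$ at $\gamma'=w(\gamma)$, solving $w(\gamma)R_m-c_m=\gamma R_{i_j}-c_{i_j}-\kappa_S$ and substituting gives
\[
g(\gamma)=\Big(1-\frac{R_{i_j}}{(1-\alpha)R_m}\Big)+\frac{c_{i_j}+\kappa_S-c_m}{(1-\alpha)R_m}\cdot\frac{1}{\gamma}=:A_m+\frac{B_m}{\gamma}.
\]
Since $w(\gamma)\le\gamma$ while $h_{i_j}$ is active at $\gamma$, convexity of $u_h$ forces the piece active at $w(\gamma)$ to have slope $R_m\le R_{i_j}$, hence $c_m\le c_{i_j}$ by Assumption~\ref{assumption:no-dominant}; therefore $B_m\ge\kappa_S/((1-\alpha)R_m)\ge0$, so $A_m+B_m/\gamma$ is convex on each subinterval.

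The main obstacle is the gluing: piecewise convexity does not imply global convexity, so one must verify that the right-hand derivative of $g$ does not jump down at the breakpoints of $u_h$. At such a breakpoint $w(\gamma)$ passes from a piece $(R_m,c_m)$ to the next $(R_{m'},c_{m'})$ with $R_m<R_{m'}$ and $c_m<c_{m'}$ (distinct costs and the reward–cost ordering of Assumption~\ref{assumption:no-dominant}); the numerators then satisfy $c_{i_j}+\kappa_S-c_m>c_{i_j}+\kappa_S-c_{m'}\ge\kappa_S\ge0$ while the denominators satisfy $R_m<R_{m'}$, so $B_m>B_{m'}$. Hence the right-hand derivative jumps from $-B_m/\gamma^2$ up to $-B_{m'}/\gamma^2$, and within each piece it is nondecreasing; a continuous function on an interval with nondecreasing right-hand derivative is convex, which gives convexity of $g$, hence of $\beta$, on $[\gamma_j,\gamma_{j+1}]$.

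Finally, for strict convexity where $\beta(\gamma)>0$ we use $\beta=g$ there: if some active piece had $B_m=0$ then $c_m=c_{i_j}+\kappa_S$, which with $c_m\le c_{i_j}$ forces $\kappa_S=0$ and $m=i_j$, so $A_m=-\alpha/(1-\alpha)\le0$ and $g\le0$ on that piece, contradicting $\beta>0$. Thus on the set $\{\beta>0\}$ (a subinterval by the monotonicity in Lemma~\ref{lemma:gamma_increasing}) every active piece has $B_m>0$, so $g''=2B_m/\gamma^3>0$, and every interior breakpoint is a strictly convex kink since $B_m>B_{m'}$; hence $\beta$ is affine on no subinterval and is strictly convex there. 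Everything except the breakpoint comparison $B_m>B_{m'}$ is routine bookkeeping, and that comparison is the one place where the no-dominated-action structure of Assumption~\ref{assumption:no-dominant} is essential.
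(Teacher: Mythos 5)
Your proof is correct and follows essentially the same route as the paper's: both reduce $\beta$ to $\max\{0,\,1-w(\gamma)/((1-\alpha)\gamma)\}$, compute the explicit piecewise derivative $-B_m/\gamma^2$ with $B_m=(c_{i_j}+\kappa_S-c_m)/((1-\alpha)R_m)$ on each subinterval where $w(\gamma)$ sits on a fixed piece of $u_h$, and verify monotonicity of the derivative both within pieces and across breakpoints (the paper phrases the breakpoint step as ``$R_{i_{j_q}}$ and $c_{i_{j_q}}$ both increase with $q$,'' which is your $B_m>B_{m'}$ comparison). If anything, you are slightly more explicit than the paper on the gluing argument and on why $B_m>0$ wherever $\beta>0$, which is what delivers strict convexity.
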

\noindent \textbf{\textit{Proof sketch:}}
Recall that, as illustrated in Figure \ref{fig:findbeta}, $\beta(\gamma)$ is given by 
\begin{equation}
\beta(\gamma) = \max \left  \{1- \frac{u_h^{-1}(u_h(\gamma) - \kappa_S)}{\gamma(1-\alpha)}, 0 \right \}.    
\end{equation}
Now, let us determine where $\beta(\cdot)$ could be nondifferentiable.
As $\gamma$ sweeps over the interval $[\gamma_j, \gamma_{j+1}]$, the corresponding $\tilde{\gamma} := u_h^{-1}(u_h(\gamma)-\kappa_S)$ may fall in this segment $[\gamma_j, \gamma_{j+1}]$ or one of the previous segments $[\gamma_i, \gamma_{i+1}]$ for some $i <j$. For instance, in the example illustrated in Figure \ref{fig:findbeta}, for $\gamma$ marked on the plot, $\tilde{\gamma}$ falls within the previous segment. Hence, there exists a sequence $\gamma_j = \gamma_{j,0} < \gamma_{j,1} < \cdots < \gamma_{j,v_j} < \gamma_{j,v_j+1}= \gamma_{j+1}$ such that $\tilde{\gamma}$ falls within the same segment for any $\gamma \in (\gamma_{j,q}, \gamma_{j,q+1})$. See Figure \ref{fig:beta_partition} for an illustration on how the $\{\gamma_{j,q}\}$ are defined.

{ Now, for any $q$, the $\beta(\cdot)$ function is differentiable over the interval $(\gamma_{j,q}, \gamma_{j,q+1})$. To prove Lemma \ref{lemma:gamma_convex}, we first establish that $\beta(\cdot)$ is indeed convex over each interval $(\gamma_{j,q}, \gamma_{j,q+1})$ by showing that its derivative is increasing there. Finally, we present an argument detailing how the convexity of $\beta(\cdot)$ over the entire interval $[\gamma_j, \gamma_{j+1}]$ can be inferred from its derivative across these subintervals. $\square$}
\begin{figure}
\centering
\begin{minipage}{.5\textwidth}
  \centering
  \begin{tikzpicture}
    \begin{axis}[
        domain=-0.2:1,
        ytick=\empty,     
        xtick=\empty,     
        height = 190,    
        width = 230,     
    ]
        \addplot[red, domain=-0.2:0.1, thick] {x*1 - 0.1};
        \addplot[orange, domain=0.1:0.4, thick] {x*3 - 0.3};
        \addplot[green, domain=0.4:0.7, thick] {x*4 - 0.7};
        \addplot[blue, domain=0.7:1, thick] {x*6 - 2.1};


        \node[label={left:{$\gamma$}}] at (axis cs:0.5,-0.5) {};
        
        \node[label={left:{\tiny $\gamma_1$}},circle,fill,inner sep=1pt] at (axis cs:-0.2,-0.3) {};
        \node[label={right:{\tiny $\gamma_2$}},circle,fill,inner sep=1pt] at (axis cs:0.1,0) {};
        \node[label={right:{\tiny $\gamma_3$}},circle,fill,inner sep=1pt] at (axis cs:0.4,0.9) {};
        \node[label={right:{\tiny $\gamma_4$}},circle,fill,inner sep=1pt] at (axis cs:0.7,2.1) {};
        \node[label={right:{\tiny $\gamma_5$}},circle,fill,inner sep=1pt] at (axis cs:1,3.9) {};

        \draw[->](axis cs:-0.2,-0.3) |- (axis cs:0.33,0.7);
        \node[label={right:{\tiny $\gamma_{2,1}$}},circle,fill,inner sep=1pt] at (axis cs:0.33,0.7) {};
        \draw[->](axis cs:0.1,0) |- (axis cs:0.425,1);
        \node[label={right:{\tiny $\gamma_{3,1}$}},circle,fill,inner sep=1pt] at (axis cs:0.425,1) {};
        \draw[->](axis cs:0.4,0.9) |- (axis cs:0.65,1.9);
        \node[label={right:{\tiny $\gamma_{3,2}$}},circle,fill,inner sep=1pt] at (axis cs:0.65,1.9) {};
        \draw[->](axis cs:0.7,2.1) |- (axis cs:0.866,3.1);
        \node[label={right:{\tiny $\gamma_{4,1}$}},circle,fill,inner sep=1pt] at (axis cs:0.866,3.1) {};

        \node[label={left:{\tiny $\kappa_S$}}] at (axis cs:-0.15,0.3) {};

    \end{axis}
\end{tikzpicture}
  \vspace{-4mm}
  \caption{How $\{\{\gamma_{j,q}\}_{q=1}^{v_j}\}_{j=1}^k$'s are defined \label{fig:beta_partition}}
\end{minipage}%
\begin{minipage}{.5\textwidth}
  \centering
  \includegraphics[width=\linewidth]{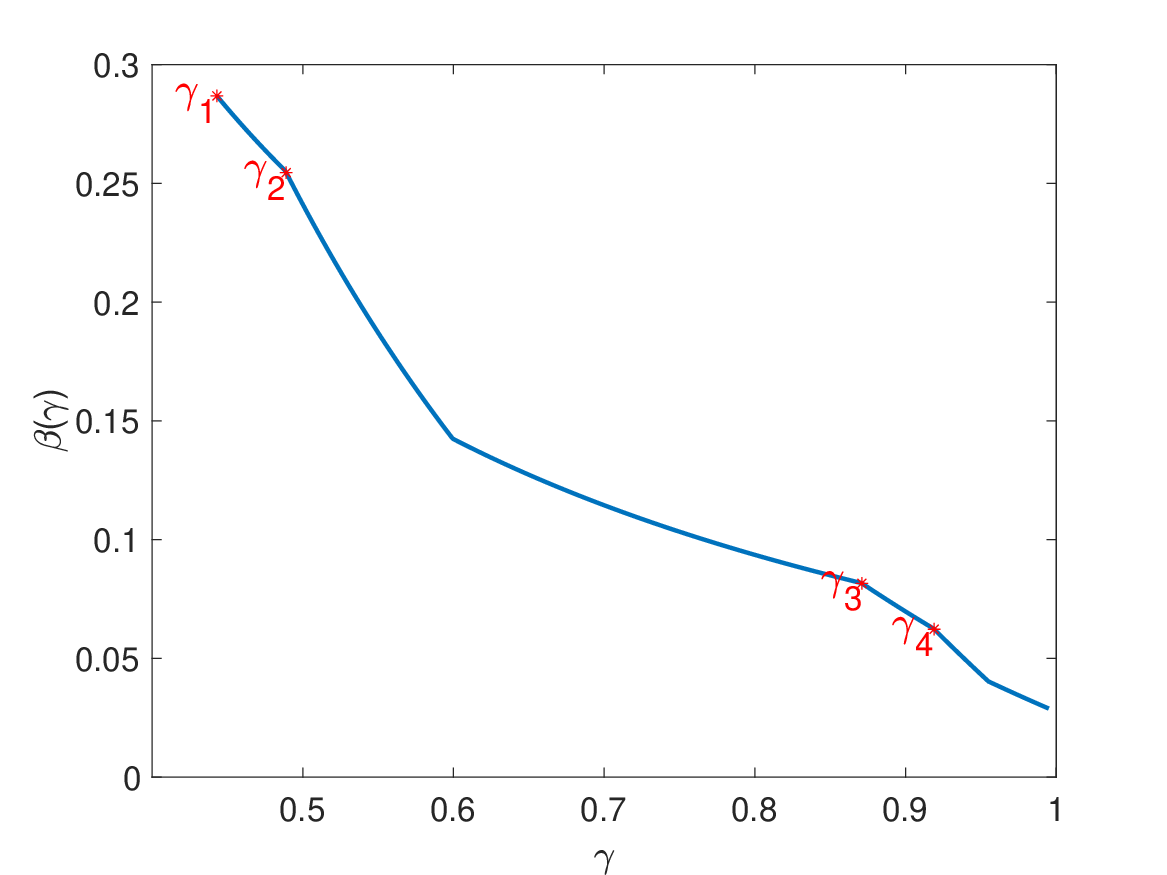}
  \caption{$\beta(\gamma)$ as a function of $\gamma$ (see Remark \ref{example:beta_gamma} for the details)}
  \label{fig:beta_gamma}
\end{minipage}
\end{figure}
\begin{remark} \label{example:beta_gamma}
It is worth noting that while $\beta(\cdot)$ is a convex function over each interval $[\gamma_j, \gamma_{j+1}]$, it is not necessarily convex over the whole interval $[\gamma_1,1]$. See Figure \ref{fig:beta_gamma} for an example with $n=6$ actions with the following parameters:
\begin{equation}
[R_i]_{i=1}^6 = [2,3,7,9,11,13], \,
[c_i]_{i=1}^6 = [1, 1.2, 2.1, 3.1, 4.8, 6.6], \, \text{and }
\kappa_I = \kappa_S = 1.
\end{equation}
\end{remark}
Now, having the characterization of the set of implementable actions, we focus on finding the optimal contract which maximizes the principal's utility. We start by showing that it can be found efficiently.  
\begin{theorem} \label{theorem:computaion_singlaAgent}
Suppose Assumptions \ref{assumption:no-dominant} and \ref{assumption:safety_feasible} hold. Then, the optimal linear contract $(\gamma^*, \beta^*)$ can be characterized in time $\mathcal{O}(n \log n)$.   
\end{theorem}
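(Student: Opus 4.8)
}
The idea is to collapse the principal's problem to a single real variable $\gamma$ and then optimize a function that, by Lemmas~\ref{lemma:gamma_increasing}--\ref{lemma:gamma_convex}, is piecewise concave with only $\mathcal{O}(n)$ pieces of a very simple shape. By Proposition~\ref{prop:upper_envelope}, a payment ratio $\gamma$ admits an implementable safe action only if $\gamma\in[\gamma_1,1]$, and then the implemented safe action is forced to be $(a_{i(\gamma)},1)$ with $i(\gamma):=i_j$ for $\gamma\in[\gamma_j,\gamma_{j+1}]$; as observed right after that proposition, the principal optimally sets $\beta=\beta(\gamma)$. Since any unsafe action yields the principal $-\infty$, the principal's problem is exactly
\[
\max_{\gamma\in[\gamma_1,1]}\ \Phi(\gamma),\qquad
\Phi(\gamma):=(1-\gamma)R_{i(\gamma)}-\kappa_I\,\beta(\gamma),
\]
where $\beta(\gamma)=\max\{1-u_h^{-1}(u_h(\gamma)-\kappa_S)/((1-\alpha)\gamma),\,0\}$, and $[\gamma_1,1]$ is nonempty by Assumption~\ref{assumption:safety_feasible}. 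First I would build a usable representation of $u_h$ and $u_h^{-1}$: by Assumption~\ref{assumption:no-dominant}, sorting the actions by cost also sorts them by reward, hence by the slope $R_i$ of the line $h_i$, so after an $\mathcal{O}(n\log n)$ sort the upper envelope $u_h$ --- its breakpoints $\gamma_1<\dots<\gamma_{k+1}=1$ and the corresponding action indices $i_1<\dots<i_k$ (with $k\le n$) --- is obtained by one linear-time incremental pass with a stack, and $\gamma_1=u_h^{-1}(\kappa_S)$ is then located by binary search over the $\mathcal{O}(n)$ envelope pieces.

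Next I would make the partition used in the proof of Lemma~\ref{lemma:gamma_convex} explicit and bound its size. Writing $\tilde\gamma(\gamma):=u_h^{-1}(u_h(\gamma)-\kappa_S)$, recall that each envelope segment $[\gamma_j,\gamma_{j+1}]$ is cut into sub-intervals $\gamma_j=\gamma_{j,0}<\dots<\gamma_{j,v_j+1}=\gamma_{j+1}$ at the $\gamma$-values where $\tilde\gamma(\gamma)$ crosses from one envelope segment to the next. Since $u_h$, hence $\tilde\gamma(\cdot)$, is increasing, as $\gamma$ ranges over $[\gamma_1,1]$ the point $\tilde\gamma(\gamma)$ sweeps monotonically and therefore crosses each breakpoint of $u_h$ at most once; so $\sum_j v_j$ is at most the number of envelope breakpoints, i.e.\ $\mathcal{O}(n)$, and the total number of sub-intervals is $\mathcal{O}(n)$ (adjoining the single point of $[\gamma_1,1]$ where the decreasing function $\beta$ reaches $0$, by Lemma~\ref{lemma:gamma_increasing}, does not change this). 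Moreover all these breakpoints can be produced in $\mathcal{O}(n)$ total time by a simultaneous forward sweep of two pointers --- one tracking the envelope segment containing $\gamma$, the other tracking the envelope segment containing $\tilde\gamma(\gamma)$ --- both of which only advance. On the interior of any such sub-interval, with $\gamma$ in envelope segment $j$ and $\tilde\gamma(\gamma)$ in the envelope segment of action index $i_p$ (necessarily $i_p\le i_j$, since $\kappa_S\ge 0$ forces $\tilde\gamma(\gamma)\le\gamma$), both $u_h(\gamma)$ and $\tilde\gamma(\gamma)$ are affine in $\gamma$, so that either $\beta\equiv 0$ there, or $\beta(\gamma)=A+B/\gamma$ with $B\ge 0$ (indeed $B=(c_{i_j}+\kappa_S-c_{i_p})/(R_{i_p}(1-\alpha))$, nonnegative since $c_{i_p}\le c_{i_j}$) --- consistent with $\beta$ being decreasing and convex there.

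Finally I would optimize $\Phi$ piecewise. On a sub-interval where $\beta\equiv 0$, $\Phi(\gamma)=(1-\gamma)R_{i_j}$ is affine and nonincreasing, so its maximum is at the left endpoint. On a sub-interval where $\beta(\gamma)=A+B/\gamma$, one has $\Phi(\gamma)=\bigl(R_{i_j}-\kappa_I A\bigr)-R_{i_j}\gamma-\kappa_I B/\gamma$, which is strictly concave by Lemma~\ref{lemma:gamma_convex}; solving $\Phi'(\gamma)=0$ gives the closed-form interior maximizer $\gamma^\star=\sqrt{\kappa_I B/R_{i_j}}$ (interpreted as $+\infty$ when $R_{i_j}=0$), and clipping $\gamma^\star$ to the two endpoints of the sub-interval yields the maximizer of $\Phi$ over that sub-interval in $\mathcal{O}(1)$ time. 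Evaluating $\Phi$ at this $\mathcal{O}(n)$-size set of candidates (sub-interval endpoints together with the clipped $\gamma^\star$'s) and keeping the best gives $\gamma^*$, after which $\beta^*=\beta(\gamma^*)$. The running time is $\mathcal{O}(n\log n)$ for sorting and building the envelope, plus $\mathcal{O}(n)$ for the sweep and the constant-time per-piece optimizations, hence $\mathcal{O}(n\log n)$ overall.

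The step I expect to be the main obstacle is the partitioning: a naive implementation --- for each of the up to $n$ envelope segments containing $\gamma$, re-locating $\tilde\gamma(\gamma)$ inside the envelope and handling every resulting piece --- would cost $\Theta(n^2)$, and it is the monotone two-pointer argument, yielding only $\mathcal{O}(n)$ pieces built in linear time, that makes the $\mathcal{O}(n\log n)$ bound possible. The remaining ingredients --- the reduction to the one-variable problem, the affine form of $\beta$ on each piece, and the closed-form piecewise maximizer --- are routine given Proposition~\ref{prop:upper_envelope} and Lemmas~\ref{lemma:gamma_increasing}--\ref{lemma:gamma_convex}.
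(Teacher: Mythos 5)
Your proposal is correct and follows essentially the same route as the paper: build the upper envelope $u_h$ in $\mathcal{O}(n\log n)$ (your sort-plus-stack construction is the monotone-chain equivalent of the paper's Graham-scan-via-duality step), refine each envelope segment at the points where $\tilde\gamma(\gamma)=u_h^{-1}(u_h(\gamma)-\kappa_S)$ crosses an envelope breakpoint (with the same $\mathcal{O}(n)$ bound on the total number of pieces, which you justify slightly more explicitly via the monotone two-pointer sweep), and then maximize the piecewise-concave utility in closed form on each piece, your $\gamma^\star=\sqrt{\kappa_I B/R_{i_j}}$ being exactly the solution of the first-order condition in the paper's Algorithm~\ref{alg:optimal_singleAgent}. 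No gaps.
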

\begin{proof}
The first step is to characterize $u_h$. Notice that, using a duality argument, we could transfer the problem of finding the upper envelope function to the problem of finding the convex hull of the set of points $(R_i,c_i)_{i=1}^n$, which can be done in $\mathcal{O}(n \log n)$ using the Graham Scan Algorithm \cite{graham1972efficient}. This allows us to find the set $\{i_1, \cdots, i_k\}$ as defined in Proposition \ref{prop:upper_envelope}, and hence, the points $\gamma_1, \cdots, \gamma_k$ in time $\mathcal{O}(k \log k)$ which is bounded by $\mathcal{O}(n \log n)$. In addition, note that we could find the points $\{\{\gamma_{j,q}\}_{q=1}^{v_j}\}_{j=1}^k$ in $\mathcal{O}(n \log n)$ defined in the proof of Lemma \ref{lemma:gamma_convex}, by computing $\{u_h^{-1}(u_h(\gamma_j) + \kappa_S)\}$ (as illustrated in Figure \ref{fig:beta_partition}). Furthermore, the total number of such points, i.e., $\sum_{j=1}^k v_j$, is also $k$, so this part can also be completed in time $\mathcal{O}(n \log n)$.

Next, for any $j \in [k]$, we first find the optimal contract ($\gamma, \beta(\gamma)$), and condition on $\gamma \in [\gamma_j, \gamma_{j+1}]$. After doing so, the principal, among all such contracts, can pick the one that leads to the highest expected utility for her. Therefore, it suffices to focus on the case $\gamma \in [\gamma_j, \gamma_{j+1}]$. Recall that, in this case, the principal's utility is given by 
\begin{equation}
\mathcal{U}_p = (1-\gamma)R_{i_j} - \beta(\gamma) \kappa_I.    
\end{equation}
Note that, by Lemma \ref{lemma:gamma_convex}, $\mathcal{U}_p$ is a concave function over $[\gamma_j, \gamma_{j+1}]$. Hence, to find its maximum, we just need to find the point $\frac{d}{d \gamma} \mathcal{U}_p =0$, where $\beta(\cdot)$ is differentiable, and also check the endpoints and points of nondifferentiable points. Thus, we would need to check for the solutions of
\begin{equation} \label{eqn:maximum_zero_derivative}
\beta'(\gamma) = -\frac{R_{i_j}}{\kappa_I}.    
\end{equation}
Using the notation in the proof of Lemma \ref{lemma:gamma_convex}, we have an explicit characterization of $\beta'(\gamma)$ over all intervals $(\gamma_{j,q}, \gamma_{j,q+1})$, and so we can find the potential solution to \eqref{eqn:maximum_zero_derivative} in time $\mathcal{O}(1)$. As a result, the total computation time that we need to check all the intervals $(\gamma_{j,q}, \gamma_{j,q+1})$, for all $j$ and $0 \leq q \leq v_j$, and their endpoints, is $\mathcal{O}(n)$. This completes the proof. 
\end{proof}
A summary of the above steps is provided in Algorithm \ref{alg:optimal_singleAgent}. It is worth noting that the optimal contract is not always unique.
\begin{algorithm*}[t]
\KwIn{The reward and cost of different actions $(R_i, c_i)_{i=1}^n$}
Find the upper envelope function $u_h(\cdot)$ by finding the convex hall of $(R_i, c_i)_{i=1}^n$ using the Graham Scan Algorithm \citep{graham1972efficient};\\
Denote the convex hull by $(R_{i_j},c_{i_j})_{j=1}^k$ such that $i_1 < \cdots < i_k$; \\
Find $\gamma_1 < \cdots < \gamma_k$ as defined in Proposition \ref{prop:upper_envelope} and illustrated in Figure \ref{fig:findbeta}. \\
Find all the nondifferentiable points $\{\{\gamma_{j,q}\}_{q=0}^{v_j+1}\}_{j=1}^k$ as illustrated in Figure \ref{fig:beta_partition}; \\
Let $\mathcal{S}$ denote the set of contracts corresponding to $\{\{(\gamma_{j,q}\}\}_{q=0}^{v_j+1}\}_{j=1}^k$ and the associated utility of the principal; \\
\For{$j=1$ to $k$}{
\For{$q=1$ to $v_j$}{
If the following equation has a solution over $[\gamma_{j,q}, \gamma_{j,q+1}]$, then add it to $\mathcal{S}$:
\begin{equation*}
\frac{c_{i_j} - c_{i_{j_q}} + \kappa_S}{R_{i_{j_q}} \gamma^2(1-\alpha)} = \frac{R_{i_j}}{\kappa_I};   
\end{equation*}
}
}
\textbf{Output}: Pick the contract(s) in $\mathcal{S}$ that maximizes the principal's utility
\caption{Computing the optimal contract in the single-agent setting}
	\label{alg:optimal_singleAgent}
\end{algorithm*}

With the computational guarantees for determining the optimal contracts in hand, we next investigate the comparative statics of the optimal contract, showing how the agent's payment and the probability of inspection vary as the parameters of the model corresponding to the inspection costs change. Note that these results do not require any assumption on the uniqueness of the optimal contract. 
\begin{theorem} \label{theorem:singleAgent_comparativestatics}
Suppose Assumptions \ref{assumption:no-dominant} and \ref{assumption:safety_feasible} hold. Let $(\gamma^*, \beta^*)$ be an optimal contract. 
\begin{enumerate}
\item Suppose the principal's cost of inspection $\kappa_I$ increases, and let $(\gamma'^*, \beta'^*)$ denote an optimal contract under this new setting. Then, we have $\gamma'^* \geq \gamma^*$ and $\beta'^* \leq \beta^*$.
\item Suppose the agent's cost of complying with safety measure $\kappa_S$ increases, and let $(\gamma''^*, \beta''^*)$ denote an optimal contract under this new setting. Then, we have $\gamma''^* \geq \gamma^*$.
\end{enumerate}
\end{theorem}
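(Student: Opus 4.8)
The plan is to treat each part as an optimization-comparative-statics argument that exploits the structure already established: for a fixed segment $[\gamma_j,\gamma_{j+1}]$ the principal maximizes $(1-\gamma)R_{i_j} - \beta(\gamma)\kappa_I$, and $\beta(\gamma)$ has the closed form $\beta(\gamma) = \max\{1 - u_h^{-1}(u_h(\gamma)-\kappa_S)/(\gamma(1-\alpha)),\,0\}$ from the proof sketch of Lemma \ref{lemma:gamma_convex}. The key preliminary observations are: (i) the \emph{feasible} set of $\gamma$-values, $[\gamma_1,1]$, and the segment-to-action map $j \mapsto i_j$ do not depend on $\kappa_I$ at all, and depend on $\kappa_S$ only through $\gamma_1 = u_h^{-1}(\kappa_S)$, which is increasing in $\kappa_S$; (ii) for fixed $\gamma$, $\beta(\gamma)$ is nondecreasing in $\kappa_S$ (raising $\kappa_S$ lowers $u_h^{-1}(u_h(\gamma)-\kappa_S)$) and does not depend on $\kappa_I$. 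I would record these first.

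For part 1 ($\kappa_I$ increases), I would argue by a global exchange/revealed-preference argument rather than segment by segment. Let $(\gamma^*,\beta^*)$ be optimal for $\kappa_I$ and $(\gamma'^*,\beta'^*)$ optimal for $\kappa_I' > \kappa_I$; WLOG $\beta^* = \beta(\gamma^*)$ and $\beta'^* = \beta(\gamma'^*)$, and since $\beta(\cdot)$ is (strictly, where positive) decreasing by Lemma \ref{lemma:gamma_increasing}, it is equivalent to show $\gamma'^* \geq \gamma^*$. Suppose for contradiction $\gamma'^* < \gamma^*$. Adding the optimality inequality at $\kappa_I$ for $\gamma^*$ against $\gamma'^*$ to the optimality inequality at $\kappa_I'$ for $\gamma'^*$ against $\gamma^*$, the $R$-terms must be handled with care because the relevant action $R_{i_j}$ jumps across segments; the clean way is to note that the principal's optimal value as a function of $\gamma$, call it $V(\gamma) := u_h(\gamma)/\gamma \cdot \gamma$… — more precisely I would write the objective on the whole feasible range as $\Phi(\gamma,\kappa_I) = \Psi(\gamma) - \beta(\gamma)\kappa_I$ where $\Psi(\gamma) := (1-\gamma)R_{i(\gamma)}$ with $i(\gamma)$ the active action, observe $\Phi$ has \emph{increasing differences} in $(\gamma, -\kappa_I)$ because $-\beta(\gamma)$ is nondecreasing in $\gamma$ (Lemma \ref{lemma:gamma_increasing}) so $\partial\Phi/\partial\gamma$ is nonincreasing in $\kappa_I$, and then invoke Topkis's monotonicity theorem on the fixed lattice $[\gamma_1,1]$ to conclude the maximizer set is nonincreasing in $\kappa_I$, i.e. moves up in $\gamma$ as $\kappa_I$ rises. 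The corresponding statement $\beta'^* \leq \beta^*$ then follows from monotonicity of $\beta(\cdot)$.

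For part 2 ($\kappa_S$ increases), the same supermodularity template applies but now $\kappa_S$ affects \emph{both} the feasible set ($\gamma_1$ rises) and the function $\beta(\cdot)$. I would (a) note that the new feasible interval $[\gamma_1',1] \subseteq [\gamma_1,1]$, so points with $\gamma < \gamma_1'$ are simply no longer available, which can only push the optimizer up; and (b) on the overlap $[\gamma_1',1]$ verify increasing differences of $\Phi(\gamma,\kappa_S) = \Psi(\gamma) - \beta(\gamma;\kappa_S)\kappa_I$ in $(\gamma,\kappa_S)$, i.e. that $\partial(-\beta(\gamma;\kappa_S))/\partial\gamma$ is nondecreasing in $\kappa_S$ — equivalently that $\beta(\gamma;\kappa_S)$ is more steeply decreasing in $\gamma$ when $\kappa_S$ is larger. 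This cross-partial computation on $\beta(\gamma) = 1 - u_h^{-1}(u_h(\gamma)-\kappa_S)/(\gamma(1-\alpha))$ is the main technical obstacle: one must differentiate through $u_h^{-1}$, using that $u_h$ is piecewise linear with (weakly) decreasing slopes $R_{i_j}$ along the envelope, and check the sign of $\partial^2\beta/\partial\gamma\,\partial\kappa_S$ including at the nondifferentiable breakpoints $\gamma_{j,q}$ (handled via one-sided derivatives, as in Lemma \ref{lemma:gamma_convex}). Once that sign is pinned down, Topkis again gives $\gamma''^* \geq \gamma^*$; note no claim about $\beta''^*$ is made, consistent with the paper's remark that inspection probability need not be monotone in $\kappa_S$. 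I expect the cross-partial sign check in part 2 — and, in part 1, making the "active action jumps across segments" issue rigorous without assuming differentiability — to be where the real work lies; everything else is bookkeeping with the structure from Proposition \ref{prop:upper_envelope} and Lemmas \ref{lemma:gamma_increasing}–\ref{lemma:gamma_convex}.
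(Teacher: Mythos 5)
Your overall strategy is the paper's: both parts are single-crossing / revealed-preference arguments (Topkis on a one-dimensional choice variable is just the ``add the two optimality inequalities'' trick), and your preliminary observations (i)--(ii) are exactly the ones the paper records. For part 1 your argument goes through, though you make it harder than necessary: the paper compares the two contracts directly through their inspection probabilities, assuming $\beta'^* > \beta^*$ for contradiction and adding $(1-\gamma^*)R - \beta^*\kappa_I \geq (1-\gamma'^*)R' - \beta'^*\kappa_I$ to $-\beta^*(\kappa_I'-\kappa_I) > -\beta'^*(\kappa_I'-\kappa_I)$, so the $(1-\gamma)R$ terms are carried along untouched and your worry about the active action $R_{i_j}$ jumping across segments never arises; $\gamma'^*\geq\gamma^*$ then follows from monotonicity of $\beta(\cdot)$. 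Your handling of the shrinking feasible set in part 2 (points below the new $\gamma_1$ simply disappear, which can only push the optimizer up) is a case the paper glosses over, and is correct.

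The genuine gap is in part 2. You correctly reduce the problem to a sign condition --- increasing differences of $\Phi(\gamma,\kappa_S)$ in $(\gamma,\kappa_S)$, equivalently that $\hat{\beta}(\gamma)-\beta(\gamma)$ is decreasing in $\gamma$, where $\hat{\beta}$ is the inspection function at the larger safety cost --- but you then label this ``the main technical obstacle'' and stop, without asserting the sign or giving any reason it comes out the right way. That claim \emph{is} the proof of part 2; everything else is routine. The paper proves it (Claim A1) as follows: writing $g(\gamma,\kappa) = u_h^{-1}(u_h(\gamma)-\kappa)/(\gamma(1-\alpha))$ so that $\beta = \max\{1-g,0\}$, one computes $\frac{\partial}{\partial\kappa}g(\gamma,\kappa) = -\bigl(\gamma(1-\alpha)\,u_h'(\tilde{\gamma})\bigr)^{-1}$ with $\tilde{\gamma}=u_h^{-1}(u_h(\gamma)-\kappa)$. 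Since $u_h$ is convex and increasing, $\tilde{\gamma}$ and hence $u_h'(\tilde{\gamma})$ increase with $\gamma$, so this partial derivative is increasing in $\gamma$; integrating over $\kappa\in[\kappa_S,\kappa_S']$ (justified via absolute continuity of the piecewise-linear $u_h^{-1}$, which also disposes of your concern about the breakpoints $\gamma_{j,q}$) shows $g(\gamma,\kappa_S')-g(\gamma,\kappa_S)$ is increasing, hence $\hat{\beta}-\beta$ is decreasing wherever both are positive; the regions where one or both hit zero are handled by a three-interval case split. Without this computation your part 2 is an outline, not a proof.
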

In essence, the first part of Theorem \ref{theorem:singleAgent_comparativestatics} shows that an increase in the inspection pricing $\kappa_I$ prompts the principal to reduce the inspection probability. Simultaneously, the principal increases the agent's payment to ensure the agent remains incentivized to observe safety protocols. The second part of the theorem shows that when the agent has to pay higher costs for adhering to safety protocols, the principal increases the agent's payment, thereby motivating continued adherence to safety protocols. In this case, and at first glance, one might intuitively presume that the principal would also increase the inspection probability, based on the same rationale. However, as highlighted in the next example, this is not necessarily the case.
\begin{example}
Consider a setting where agent has $n=6$ actions with rewards and costs given by
\begin{equation}
[R_i]_{i=1}^6 = [1.5, 3,4,6,7,9], \quad 
[c_i]_{i=1}^6 = [1, 1.3, 1.5, 2.5, 3.4, 5.2].
\end{equation}
Figure \ref{fig:optimalcontract_pi_S} illustrates the optimal share of payment to the agent $\gamma^*$ and the probability of inspection $\beta^*$ as we increase $\kappa_S$. In particular, as Figure \ref{fig:gamma_pi_S} shows $\gamma^*$ is a (weakly) increasing function of $\kappa_S$ which is aligned with our result in Theorem \ref{theorem:singleAgent_comparativestatics}. On the other hand, as we discussed above and Figure \ref{fig:beta_pi_S} shows, $\beta^*$ is not necessarily a monotone function of the cost $\kappa_S$. { The underlying cause of this phenomenon is the interplay of two opposing forces. On one hand, by increasing the cost of safety for the agent, i.e., $\kappa_S$, the beta function $\beta(\cdot)$ increases pointwise. Put another way, had the payment ratio $ \gamma$ remained unchanged, the probability of inspection would have gone up. However, as we established earlier, the optimal $\gamma^*$ at equilibrium may also increase. In this context, an increased payment implies that we reduce the probability of inspection. The combined impact of these two forces dictates whether the optimal probability of inspection $\beta^*$ ascends or descends as a function of $\kappa_S$.}
\begin{figure}
\centering
\begin{subfigure}{.5\textwidth}
  \centering
  \includegraphics[width=\linewidth]{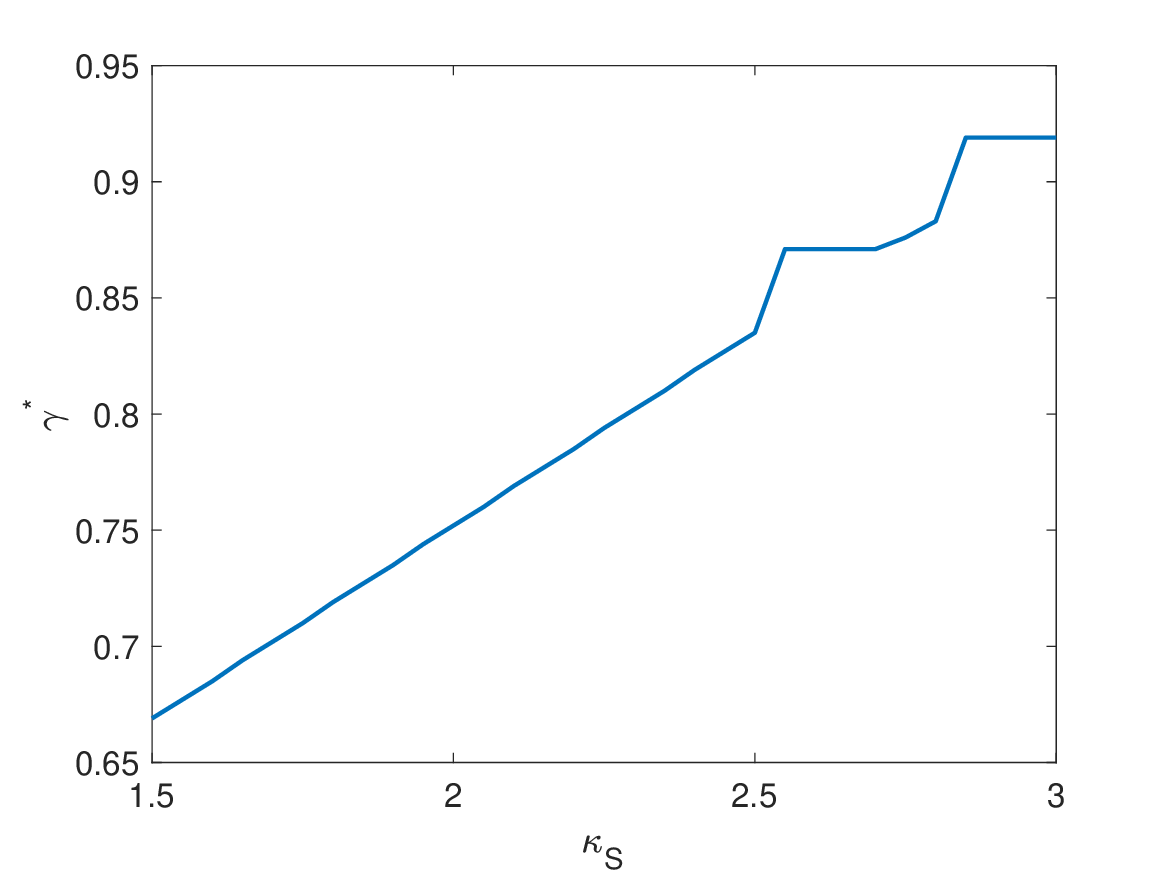}
  \caption{$\gamma^*$ as a function of $\kappa_S$}
  \label{fig:gamma_pi_S}
\end{subfigure}%
\begin{subfigure}{.5\textwidth}
  \centering
  \includegraphics[width=\linewidth]{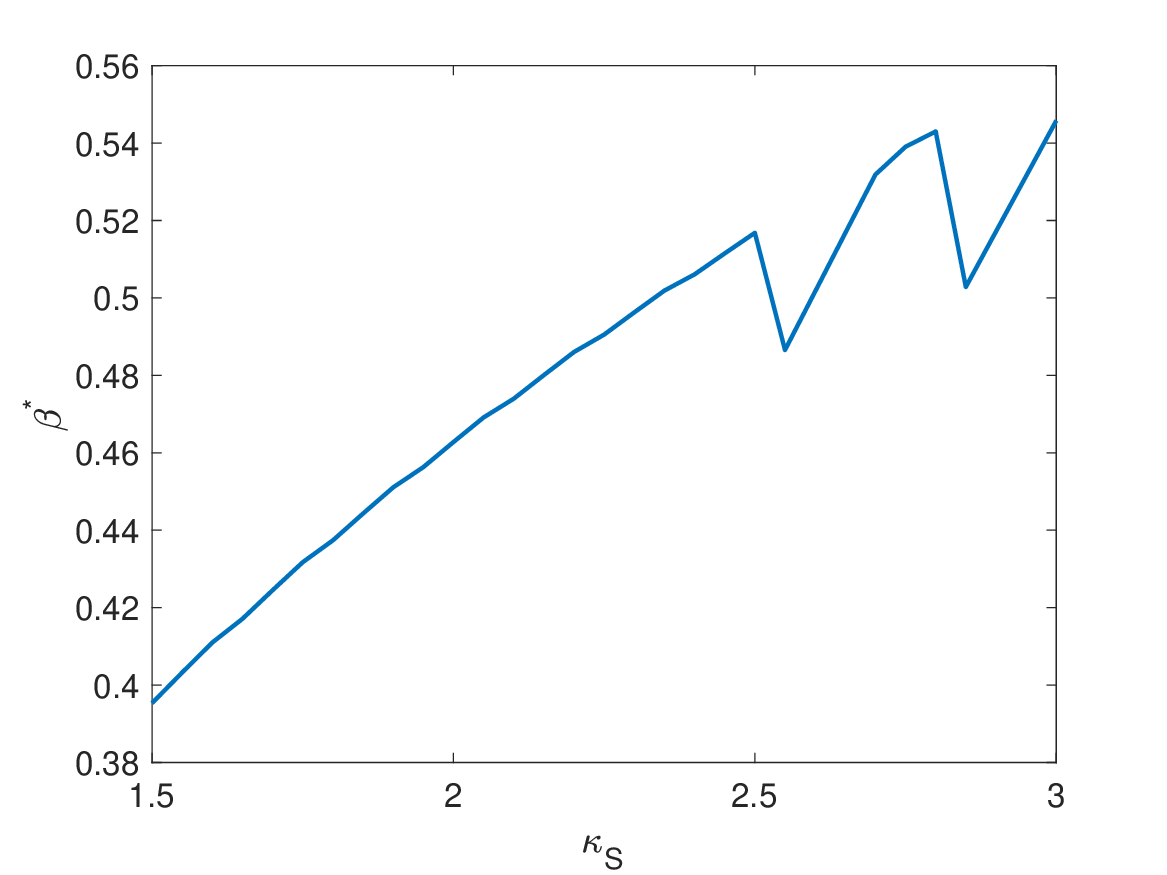}
  \caption{$\beta^*$ as a function of $\kappa_S$}
  \label{fig:beta_pi_S}
\end{subfigure}
\caption{An example depicting how the optimal contract changes as $\kappa_S$ increases.}
\label{fig:optimalcontract_pi_S}
\end{figure}
\end{example}
\section{The Multi-Agent Setting}
We turn to the multi-agent setting. Although we understand how to determine the optimal contract for each agent, we must ensure that the cumulative probability of inspection does not exceed {the inspection budget}. For every agent \( \ell \in [m] \), let \( \beta_{\min}^\ell \) represent the lowest possible probability of inspection across all contracts that implement one of the actions for agent \( j \). As established by Lemma \ref{lemma:gamma_increasing}, the probability of inspection diminishes as the agent's payment increases. Thus, it reaches its minimum when the agent receives the maximum payment, which is the total reward. Consequently, we deduce that \( \beta_{\min}^\ell = \beta^\ell(1) \), where \( \beta^\ell(\cdot) \) is the \( \beta(\cdot) \) function introduced in Proposition \ref{prop:upper_envelope} in relation to agent \( \ell \).\footnote{We reuse the major notation from the single-agent setting by using superscripts to distinguish among the different agents. In particular, recalling Proposition \ref{prop:upper_envelope}, and for any agent $\ell \in [m]$, we have $0 < \gamma_1^\ell < \gamma_2^\ell < \cdots< \gamma_{k^\ell}^\ell < \gamma_{k^\ell+1}^\ell =1$, where for any $j \in [k^\ell]$, the action $(a_{i_j^\ell}^\ell,1)$ is implementable for agent $\ell$ with $\gamma \in [\gamma_j^\ell, \gamma_j^{\ell+1}]$. } The subsequent assumption ensures the presence of at least one feasible set of contracts.
\begin{assumption}\label{assumption:sum_minimum_beta}
We have $\sum_{\ell=1}^m \beta_{\min}^\ell \leq { B}$.    
\end{assumption}
Using Theorem \ref{theorem:computaion_singlaAgent}, we can find optimal contracts for different agents separately and do so in time \(\mathcal{O}(mn \log n)\). However, the total probability of inspection could potentially exceed {$B$}. In such cases, compromises would be necessary, meaning we would have to consider suboptimal contracts that can be implemented with a lower probability of inspection than the optimal ones. Consequently, the following natural question arises:
\textit{For any agent $\ell \in [m]$, and given some \(\bar{\beta}^\ell\), which contract maximizes the principal's utility $\mathcal{U}_p^\ell$ among all those whose probability of inspection is at most \(\bar{\beta}^\ell\)?}

Notice that, with $\gamma \in [\gamma_j^\ell, \gamma_{j+1}^\ell]$, the utility of principal from agent $\ell$'s action is given by
\begin{equation} \label{eqn:utility_function_gamma}
\mathcal{U}_p^\ell(\gamma, \beta^\ell(\gamma)) = (1-\gamma)R_{i_j^\ell}^\ell - \beta^\ell(\gamma) \kappa^\ell_I.     
\end{equation}
For the sake of our analysis, we find it more convenient to interpret the principal's utility as a function of $\beta$. Let us denote $\beta(\gamma_j^\ell)$ by $\beta_j^\ell$. Notice that, since $\beta^\ell(\cdot)$ is a decreasing function, we have
\begin{equation}
\beta_1^\ell > \cdots > \beta_{k^\ell}^\ell > \beta_{k^\ell+1}^\ell = \beta_{\min}^\ell.    
\end{equation}
We also denote the inverse of $\beta^\ell(\cdot)$ by $\gamma^\ell(\cdot)$. Using Lemma \ref{lemma:gamma_increasing} and \ref{lemma:gamma_convex}, it is straightforward to verify that $\gamma^\ell(\cdot)$ is also a decreasing function and it is convex over each interval $[\beta_{j+1}^\ell, \beta_j^\ell]$.

Next, we can rewrite the principal's utility from agent $\ell$'s action given in \eqref{eqn:utility_function_gamma} as a function of $\beta$. In particular, for any $\beta \in [\beta_{j+1}^\ell, \beta_j^\ell]$, we have
\begin{equation} \label{eqn:U_p_ell_beta}
\mathcal{U}_p^\ell ( \gamma^\ell(\beta), \beta) = (1-\gamma^\ell(\beta) )R_{i_j^\ell}^\ell - \beta \kappa^\ell_I.     
\end{equation}
This is a concave function over each interval $[\beta_{j+1}^\ell, \beta_j^\ell]$. Figure \ref{fig:principal_utility_beta_gamma} depicts this function for the example provided in Remark \ref{example:beta_gamma}. The dashed lines highlight the intervals $[\beta_{j+1}^\ell, \beta_j^\ell]$'s.

Now, going back to our question above, with a slight abuse of notation, we denote the maximum utility that the principal can obtain from agent $\ell$ given the condition $\beta \leq \bar{\beta}$ by $\mathcal{U}_p^\ell(\bar{\beta})$ which is given by
\begin{equation}\label{eqn:principal_utility_bar_beta}
\mathcal{U}_p^\ell(\bar{\beta}) = \max_{\beta \leq \bar{\beta}} \mathcal{U}_p^\ell ( \gamma^\ell(\beta), \beta).    
\end{equation}
It is straightforward to see that this function is (weakly) increasing. Moreover, since $\mathcal{U}_p^\ell( \gamma^\ell(\beta), \beta)$ is concave over each interval $[\beta_{j+1}^\ell, \beta_j^\ell]$, the function $\mathcal{U}_p^\ell(\bar{\beta})$ consists of segments that are either constant or both concave and increasing.
Figure \ref{fig:principal_utility_ironed} illustrates this function $\mathcal{U}_p^\ell(\bar{\beta})$ for the same example of Figure \ref{fig:principal_utility_beta_gamma}. 
\begin{figure}
\centering
\begin{subfigure}{.5\textwidth}
  \centering
  \includegraphics[width=\linewidth]{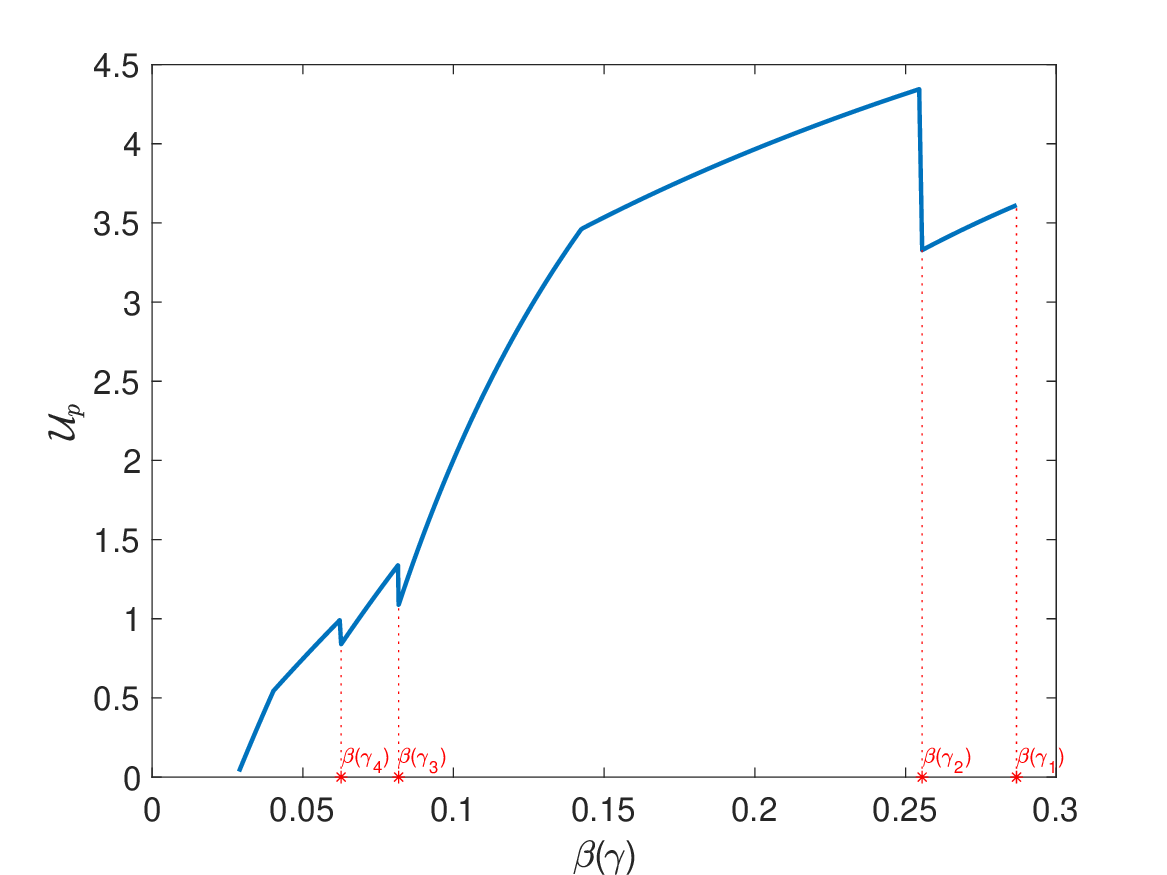}
  \caption{$\mathcal{U}_p^\ell( \gamma^\ell(\beta), \beta)$}
  \label{fig:principal_utility_beta_gamma}
\end{subfigure}%
\begin{subfigure}{.5\textwidth}
  \centering
  \includegraphics[width=\linewidth]{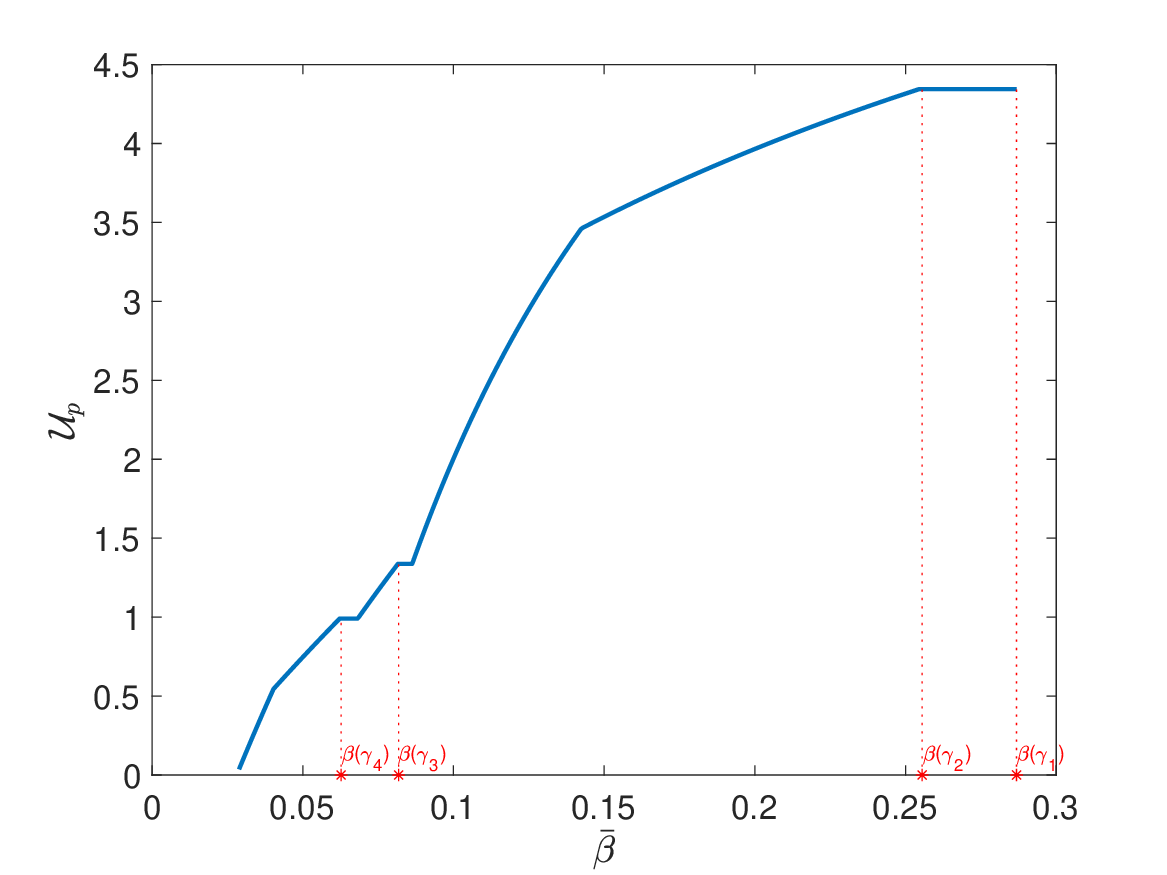}
  \caption{$\mathcal{U}_p^\ell(\bar{\beta})$}
  \label{fig:principal_utility_ironed}
\end{subfigure}
\caption{Principal's utility for agent's $\ell$ action}
\label{fig:optimalcontract_pi_S}
\end{figure}

Now, the principal's problem can be formulated as
\begin{equation}\label{eqn:principal_problem_multiAgent}   
\max_{(\bar{\beta}^\ell)_{\ell=1}^m} \sum_{\ell=1}^m \mathcal{U}_p^\ell(\bar{\beta}^\ell) \,
\text{ such that } \sum_{\ell=1}^m \bar{\beta}^\ell \leq B.
\end{equation}
{ The optimization problem \eqref{eqn:principal_problem_multiAgent} bears a resemblance to the multiple-choice knapsack problem (MCKP), a variant of the classic knapsack problem. In the MCKP, items are categorized into classes, and the goal is to maximize the cumulative value of the chosen items without surpassing the knapsack's capacity and under the constraint of selecting at most one item from each class. To make the connection to our setting clearer, consider discretizing each function $\mathcal{U}_p^\ell(\cdot)$ and grouping all the samples into a single class. For a given class $ \ell \in [m]$, each item takes the form $(\beta, \mathcal{U}_p^\ell(\beta))$, where the probability of inspection $\beta$ is seen as the weight and  $\mathcal{U}_p^\ell(\beta)$ is interpreted as this item’s value. The knapsack's capacity in our scenario is set to $B$, representing the total inspection budget. The constraint of selecting one item from each class translates to the constraint that we can inspect each agent at most once.

A comprehensive survey of various methods to address the MCKP can be found in \cite{kellerer2004multiple}. We choose to use a dynamic programming approach which is inspired by the algorithm presented in \cite{dudzinski1987exact} for the MCKP. This yields the following complexity result.
}
\begin{theorem} \label{theorem:multi-agent-DP}
Suppose Assumptions \ref{assumption:no-dominant}-\ref{assumption:sum_minimum_beta} hold. Then, for any $\epsilon > 0$, an $\epsilon$-approximate solution to the principal's problem \eqref{eqn:principal_problem_multiAgent} can be found in time $\mathcal{O}\left ( (mn + \frac{m^3 B}{\epsilon^2})\log n \right )$.    
\end{theorem}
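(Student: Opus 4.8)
The plan is to reduce the principal's problem \eqref{eqn:principal_problem_multiAgent} to a discretized multiple-choice knapsack instance and then run a dynamic program over budget levels, carefully controlling the discretization error. First I would preprocess: by Theorem \ref{theorem:computaion_singlaAgent}, for each agent $\ell$ we can compute the full description of $\beta^\ell(\cdot)$, the breakpoints $\beta_j^\ell$, and hence the piecewise-concave function $\mathcal{U}_p^\ell(\cdot)$ from \eqref{eqn:principal_utility_bar_beta}, all in time $\mathcal{O}(mn\log n)$; this accounts for the $mn\log n$ term. The key structural fact I would use is that $\mathcal{U}_p^\ell(\bar\beta)$ is weakly increasing and, on each of the $\mathcal{O}(n)$ pieces, either constant or concave-increasing, so an optimal choice of $\bar\beta^\ell$ within a piece is determined by a first-order condition that can be solved in $\mathcal{O}(1)$ per piece — this is what makes the per-agent "value as a function of allotted budget" cheap to query exactly.

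Next I would discretize the inspection budget. Put $\delta := \epsilon B / (mB_{\max})$ for an appropriate normalization (or more simply, use a grid of spacing proportional to $\epsilon/m$ scaled so that the per-agent rounding loss is at most $\epsilon/m$), giving a grid of $\mathcal{O}(mB/\epsilon)$ admissible values of $\bar\beta^\ell$ in $[0,B]$. Then I would set up the standard MCKP dynamic program: process agents $\ell=1,\dots,m$ one at a time, maintaining a table $D_\ell[\,\cdot\,]$ indexed by the $\mathcal{O}(mB/\epsilon)$ discretized total-budget levels, where $D_\ell[w]$ is the best achievable $\sum_{\ell'\le\ell}\mathcal{U}_p^{\ell'}(\bar\beta^{\ell'})$ subject to $\sum_{\ell'\le\ell}\bar\beta^{\ell'}\le w$. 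The update $D_\ell[w] = \max_{\beta\le w}\bigl(D_{\ell-1}[w-\beta] + \mathcal{U}_p^\ell(\beta)\bigr)$ naively costs $\mathcal{O}((mB/\epsilon)^2)$ per agent; to get the claimed bound I would instead observe that within each of the $\mathcal{O}(n)$ concave pieces of $\mathcal{U}_p^\ell$ the relevant maximization is a max-plus convolution of a concave sequence with an arbitrary one, which (by a SMAWK-type / monotone-argmax argument, or simply because we only need to consider a bounded number of candidate $\beta$ values per target $w$) can be done in time nearly linear in the table size; combined over the $\mathcal{O}(n)$ pieces and $m$ agents this yields $\mathcal{O}\bigl(\frac{m^3 B}{\epsilon^2}\log n\bigr)$ for the DP phase. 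The $\log n$ factor comes from the binary searches needed to locate, for each queried $\beta$, which piece of $\mathcal{U}_p^\ell$ it lies in.

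The final step is the error analysis: I would argue that rounding each $\bar\beta^\ell$ down to the nearest grid point changes $\mathcal{U}_p^\ell$ by at most $L\delta$, where $L$ bounds the slope of $\mathcal{U}_p^\ell$ (finite because $\mathcal{U}_p^\ell$ has bounded derivative on each piece, controlled by $\max_i R_i^\ell / \kappa_I^\ell$ via \eqref{eqn:U_p_ell_beta} and Lemma \ref{lemma:gamma_convex}); choosing $\delta$ so that $mL\delta \le \epsilon$ makes the total loss at most $\epsilon$, and the rounded-down budgets still satisfy the capacity constraint since $\sum_\ell \lfloor\bar\beta^\ell\rfloor_\delta \le \sum_\ell \bar\beta^\ell \le B$. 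Feasibility of the starting point is guaranteed by Assumption \ref{assumption:sum_minimum_beta}. The main obstacle I anticipate is the DP complexity bookkeeping: getting from the naive quadratic update to the stated $\mathcal{O}(m^3B/\epsilon^2 \cdot \log n)$ requires exploiting concavity of $\mathcal{U}_p^\ell$ on each piece (monotonicity of the optimal "split" as $w$ increases) so that each piece contributes only near-linearly in the number of grid points rather than quadratically, and one has to check that the $\mathcal{O}(n)$ pieces across all agents sum correctly against the $m$ and $B/\epsilon$ factors — everything else (discretization, slope bounds, feasibility) is routine.
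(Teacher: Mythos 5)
Your overall strategy---preprocess each $\mathcal{U}_p^\ell$ in $\mathcal{O}(n\log n)$, discretize the budget at resolution $\delta\sim\epsilon/m$, run a multiple-choice-knapsack dynamic program, and bound the rounding loss by a Lipschitz estimate---is the same as the paper's. But your complexity accounting has a real gap. You correctly note that the naive table update would cost $\mathcal{O}((mB/\epsilon)^2)$ per agent if the inner maximization ranged over all budget levels, and you then invoke a SMAWK-type / monotone-argmax argument on each concave piece to repair this; you never actually verify that this yields the stated $\mathcal{O}(m^3B/\epsilon^2\log n)$, and near-linear-per-agent updates would in fact give a \emph{better} bound than claimed, which should have been a warning sign. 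The paper needs no such machinery: since each $\bar\beta^\ell$ is a probability, the per-agent allocation is at most $1$, so the inner maximum in the recursion $V(l,j)=\max_{\eta\le\min\{j,1/\delta\}}\bigl(V(l-1,j-\eta)+\mathcal{U}_p^l(\beta_{\min}^l+\eta\delta)-\mathcal{U}_p^l(\beta_{\min}^l)\bigr)$ ranges over only $1/\delta=\mathcal{O}(m/\epsilon)$ candidates, not $\mathcal{O}(mB/\epsilon)$. With $\mathcal{O}(mB/\epsilon)$ cells per agent, $\mathcal{O}(m/\epsilon)$ work per cell, an $\mathcal{O}(\log n)$ query cost, and $m$ agents, the naive DP already gives $\mathcal{O}(m^3B/\epsilon^2\log n)$.

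Two smaller issues. First, you do not pre-allocate the minimum inspection levels: the paper shifts variables to $x^\ell=\bar\beta^\ell-\beta_{\min}^\ell$ and discretizes only the surplus budget $\tilde B=B-\sum_\ell\beta_{\min}^\ell$. This matters because $\mathcal{U}_p^\ell(\bar\beta)$ is only defined (finite) for $\bar\beta\ge\beta_{\min}^\ell$, and your round-down step could push an allocation below that floor, destroying implementability; rounding the \emph{surplus} down avoids this while keeping the total within budget. Second, your Lipschitz constant is misidentified: you write that the slope of $\mathcal{U}_p^\ell$ is controlled by $\max_i R_i^\ell/\kappa_I^\ell$, but the relevant bound comes from $|d\gamma^\ell/d\beta|\le 1/f'(\gamma)\le R_n^\ell/\kappa_S^\ell$ (via \eqref{eqn:derivative_f_simple2}), giving a Lipschitz parameter of $(R_n^\ell)^2/\kappa_S^\ell-\kappa_I^\ell$; the safety cost $\kappa_S^\ell$, not the inspection cost, appears in the denominator. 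The structure of your error argument (choose $\delta$ so that $m$ times the per-agent loss is at most $\epsilon$) is otherwise the same as the paper's Lemma \ref{claim:error_discretization}.
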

\begin{proof}
{ 
First recall that, similar to Algorithm \ref{alg:optimal_singleAgent}, we can characterize the function $\mathcal{U}_p^\ell(\cdot)$ in time $\mathcal{O}(n \log n)$ for any $\ell$, and hence, in time $\mathcal{O}(m n \log n)$ for all $\ell \in [m]$. Consequently, we can compute $\mathcal{U}_p^\ell(\beta)$ at any given $\beta$ in time $\mathcal{O}(\log n)$.

To ensure each agent receives the minimum inspection, we rewrite the optimization problem \eqref{eqn:principal_problem_multiAgent} as
\begin{equation}\label{eqn:principal_problem_multiAgent_1}   
\max_{(x^\ell)_{\ell=1}^m} \sum_{\ell=1}^m \mathcal{U}_p^\ell(\beta_{\min}^\ell + x^\ell) \,
\text{ such that } 
\sum_{\ell=1}^m x^\ell \leq \tilde{B}:= B - \sum_{\ell=1}^m \beta_{\min}^\ell.
\end{equation}
We next discretize the inspection levels with stepsize $\delta >0$. Let us denote the grid corresponding to agent $\ell$ by $\mathcal{G}^\ell$. 
For any $l \in [m]$ and nonegative $j \leq 
\frac{\tilde{B}}{\delta}$, let $V(l, j)$ be the solution to the following maximization problem:
\begin{equation}\label{eqn:principal_problem_multiAgent_2}   
V(l, j):= \max_{(x^\ell \in \mathcal{G}^\ell)_{\ell=1}^l} \sum_{\ell=1}^l \mathcal{U}_p^\ell(\beta_{\min}^\ell + x^\ell) +
\sum_{\ell=l+1}^m \mathcal{U}_p^\ell(\beta_{\min}^\ell)
\,
\text{ such that } 
\sum_{\ell=1}^l x^\ell \leq j \delta.
\end{equation}
In other words, $V(l, j)$ represents the highest utility the principal can achieve when searching over the grid, provided they only consider the first $l$ agents for any inspections beyond the minimum and allocate only $j\delta$ from their additional inspection budget.

Note that $V(l,j)$ admits the following recursive characterization:
\begin{equation} \label{eqn:recursive}
V(l,j) = \max_{\eta =0, \cdots, \min\{j, 1/\delta\}} 
\left ( V(l-1, j-\eta) + \mathcal{U}_p^l(\beta_{\min}^l + \eta \delta) - 
\mathcal{U}_p^l(\beta_{\min}^l)
\right ).
\end{equation}
Using \eqref{eqn:recursive}, $V(l,j)$ can be computed in time $\mathcal{O}(\log n/\delta)$. As a result, we can compute $V(m, \lfloor \tilde{B}/\delta \rfloor)$ in time $\mathcal{O}(m \tilde{B}/\delta^2 \log n)$ (and by going back recursively, we find the corresponding optimal probability of inspection for each agent). Finally, we bound the error of such a discretization.
\begin{lemma} \label{claim:error_discretization}
Let $\text{OPT}$ denote the solution to \eqref{eqn:principal_problem_multiAgent_1}. Then, the solution found over the grid $\prod_{\ell=1}^m \mathcal{G}^\ell$ with stepsize $\delta$ is lower bounded by
\begin{equation}
\text{OPT} - \delta \left ( \sum_{\ell=1}^m \left [ \frac{(R_n^\ell)^2}{\kappa^\ell_S}-\kappa_I^\ell \right ] \right ).    
\end{equation}
\end{lemma}
We defer the proof of the lemma to the appendix. Given this result, by setting 
\begin{equation}
\delta = \epsilon ~\mathcal{U}_p^1 \bigl(B - \sum_{\ell=2}^m \beta_{\min}^\ell \bigr) \left ( m \max_\ell \frac{(R_n^\ell)^2}{\kappa^\ell_S} \right )^{-1},    
\end{equation}
in which $\mathcal{U}_p^1(B - \sum_{\ell=2}^m \beta_{\min}^\ell)$ serves as a lower bound for the $\text{OPT}$, we obtain the desired approximation.
}
\end{proof}
{
Having obtained the approximately optimal solution, a natural question arises regarding its implementation: Given a specific allocation of the inspection budgets $(\bar{\beta}^\ell)_{\ell=1}^m$, how should the $B$ inspectors be (randomly) allocated among the $m$ agents to ensure that agent $\ell$ is inspected with probability $\bar{\beta}^\ell$? When $B=1$, the solution is straightforward: one can generate a uniform random variable $U$ over $[0,1]$ and inspect agent $\ell$ if $U$ falls within the interval $[\sum_{i=1}^{\ell-1} ~\bar{\beta}^i, \sum_{i=1}^{\ell} ~\bar{\beta}^i]$. However, when $B$ is greater than one, the situation becomes more complex because we must ensure that each agent is inspected by at most one inspector. We can formulate this problem within the framework of the well-known Birkhoff-von-Neumann algorithm \citep{birkhoff1946tres} by constructing a matrix where each entry represents the probability that a specific agent is inspected by a particular inspector. This algorithm presents a method to decompose an \(m \times m\) bistochastic matrix into a convex combination of permutation matrices, with a time complexity of \(\mathcal{O}(m^2)\). However, in our setting, there are no constraints on the joint distribution of inspectors and agents, apart from the provided marginals. This allows us to derive an intuitive and simpler algorithm that runs in \(\mathcal{O}(m)\) time complexity. 
\begin{lemma} \label{lemma:scheduling}
For any given vector of $(\bar{\beta}^\ell)_{\ell=1}^m$, there exists a random algorithm to assign inspectors to agents in time $\mathcal{O}(m)$ such that each agent $\ell \in [m]$ is inspected with probability $\bar{\beta}^\ell$.  
\end{lemma}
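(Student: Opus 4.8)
The plan is to assign inspectors sequentially, $1$ through $B$, maintaining the invariant that after the first $q$ inspectors have been assigned, each agent $\ell$ has been inspected (by exactly one of those $q$ inspectors) with probability exactly $\min\{\bar\beta^\ell, q\}$ restricted appropriately — more precisely, that the "coverage" delivered so far to agent $\ell$ is the correct partial amount, and that for any single inspector among the $q$, agent $\ell$ is inspected with probability at most $1$.

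Here is the construction in detail. Order the agents $1,\dots,m$ and consider the cumulative sums $S_0 = 0$, $S_\ell = \sum_{i=1}^\ell \bar\beta^i$; note $S_m \le B$ by the budget constraint. First I would place the $B$ "unit intervals" end to end on the segment $[0,B]$, the $k$-th inspector owning $[k-1,k]$, and lay the agents' demands along the same segment: agent $\ell$ occupies $[S_{\ell-1}, S_\ell]$, a set of total length $\bar\beta^\ell \le 1$. Because each agent's interval has length at most one, it straddles at most one integer point, hence meets at most two consecutive inspector intervals; for any two consecutive inspectors there is at most one agent whose interval crosses the boundary between them — this is the structural observation flagged in the introduction. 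Now draw a single uniform random variable $U \sim \mathrm{Unif}[0,1]$ and for inspector $k$ let it inspect the unique agent $\ell$ (if any) whose interval $[S_{\ell-1},S_\ell]$ contains the point $k-1+U$ (wrapping: the point $(k-1+U) \bmod B$, or equivalently just the point $k-1+U$ if $k-1+U \le B$, and otherwise no one / or we shift — I will handle the wrap carefully so the marginals come out exactly).

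The key steps, in order, are: (i) verify the marginal — agent $\ell$ is inspected iff some point of the arithmetic-progression-mod-$1$ family $\{U, 1+U, \dots, (B-1)+U\}$ (or its wrapped version on $[0,B]$) lands in $[S_{\ell-1},S_\ell]$, and since this family consists of $B$ points spaced exactly one apart on a length-$B$ cycle while the target interval has length $\bar\beta^\ell < 1$, exactly one of the $B$ points can land there, and it does so with probability exactly $\bar\beta^\ell$ (the fractional parts are a single uniform point on $[0,1]$ hitting a window of width $\bar\beta^\ell$); (ii) verify the at-most-once property — distinctness of the $B$ points modulo the cycle means at most one of them lies in agent $\ell$'s interval, so at most one inspector is assigned to $\ell$; (iii) verify every inspector is assigned to at most one agent — its point $k-1+U$ lies in exactly one agent interval (the agent intervals partition $[0,S_m]$, and if the point falls in $(S_m, B]$ that inspector is idle), which is fine; (iv) observe the whole thing uses $O(m)$ (equivalently $O(m+B)$, and $B \le m$ after a trivial reduction, or one can note idle inspectors are free) preprocessing to build the $S_\ell$'s and $O(1)$ per inspector to locate its point by walking the sorted list, for total $O(m)$.

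The main obstacle I expect is getting the wrap-around bookkeeping exactly right so that the marginal is \emph{exactly} $\bar\beta^\ell$ and not merely approximately — in particular handling the "gap" $[S_m, B]$ of total length $B - S_m$ and making sure an agent whose interval would otherwise be split across the seam $0 \equiv B$ still receives its full probability. The clean fix is to work on the cycle $\mathbb{R}/B\mathbb{Z}$ from the outset: the $B$ sample points $\{(k-1+U) \bmod B : k \in [B]\}$ are then genuinely equally spaced on a circle of circumference $B$, any arc of length $\bar\beta^\ell \le 1$ contains at most one of them, and by rotational symmetry of $U$ it contains exactly one with probability $\bar\beta^\ell$; assigning inspector $k$ to whichever agent-arc contains its point (idle if it lands in the length-$(B-S_m)$ complementary arc) then delivers all three required properties simultaneously, and the argument is a one-paragraph computation rather than a case analysis. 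I would also remark that this is precisely a (very sparse) special case of Birkhoff–von Neumann, which explains why no bistochastic-matrix machinery is needed.
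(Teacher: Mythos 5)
Your construction is correct, but it is genuinely different from the one in the paper. The paper builds the assignment sequentially and \emph{conditionally}: inspector $b$'s distribution is defined as a function of which agent inspector $b-1$ actually chose, with an explicit case split on whether the boundary agent $\ell_{b-1}$ was already inspected, and the marginals are then verified by a short conditional-probability computation. You instead use a single shared uniform variable $U$ and the equally spaced points $U, 1+U, \dots, (B-1)+U$ on $[0,B)$ against the tiling of $[0,S_m)$ by the agent intervals $[S_{\ell-1},S_\ell)$ --- i.e., systematic sampling. The coupling that the paper engineers by hand (consecutive inspectors never both hit the shared boundary agent) falls out automatically in your scheme because the sample points are spaced exactly one apart while each agent's interval has length $\bar\beta^\ell \le 1$; the marginal computation is the observation that the mod-$1$ projection of an interval of length $\bar\beta^\ell \le 1$ has measure exactly $\bar\beta^\ell$ and the per-$k$ events are disjoint. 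Your approach needs only one random draw and a one-paragraph verification; the paper's needs $B$ draws and a case analysis, but makes the ``residual probability'' bookkeeping explicit, which some readers may find more transparent. One small remark: the wrap-around issue you spend time on does not actually arise --- every sample point lies in $[0,B)$ and every agent interval lies in $[0,S_m) \subseteq [0,B)$, so the construction works on the line segment with half-open intervals and no seam; passing to $\mathbb{R}/B\mathbb{Z}$ is harmless but unnecessary. With half-open intervals the degenerate case $\bar\beta^\ell = 1$ is also handled cleanly. The $\mathcal{O}(m)$ bound follows as you say by a single merge-style walk over the sorted breakpoints.
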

\noindent \textbf{\textit{Proof sketch:}}
Consider the first inspector who wants to allocate their one unit of inspection across agents. They start with agent one, dedicating a \(\bar{\beta}^1\) fraction of their inspection budget, and then proceed to agent two, continuing in this manner until their budget is over. This procedure, however, implies that the final agent they inspected (let us call it agent \(\ell\)) might be inspected at a probability lower than the target, \(\bar{\beta}^\ell\), due to the depletion of their inspection unit. Consequently, the second inspector should start from this agent, taking care of the residual inspection probability, and then advance to subsequent agents. 

It is critical to ensure that agent \(\ell\) undergoes inspection by no more than a single inspector.  We design a joint inspection plan for the initial two inspectors ensuring that, in each instance, at most one of them inspects agent \(\ell\). In essence, the inspection schedule of the second inspector is contingent on the actions of the first. The second inspector is allowed to inspect agent \(\ell\) only when the first inspector had inspected one of the initial \(\ell-1\) agents. The scheduling for subsequent inspectors is designed similarly, ensuring that in cases where two consecutive inspectors are in charge of inspecting one agent, they do not conduct the inspection simultaneously. The details of this method are provided in the appendix for the sake of completeness. $\square$
\subsection{An Illustrative Example}
In general, the structure of the optimal allocation of the inspection budget across agents could be complex or even counter-intuitive. For example, even when we have a set of homogeneous agents, the optimal inspection allocation does not necessarily involve inspecting all agents equally. To illustrate, consider Figure \ref{fig:principal_utility_ironed}. Imagine we have two identical agents whose corresponding principal utilities are depicted in this figure. Specifically, the utility of the principal from one agent when inspected with probability $\beta_{\min}+0.1$ is above 3, which is more than twice the utility from one agent when inspected with probability $\beta_{\min} + 0.05$, a value below 1.3. This suggests that, given two homogeneous agents with these characteristics, inspecting both equally could be suboptimal compared to inspecting one at the minimum level and allocating all the extra inspection budget to the other.

We next provide a case study that illuminates the phenomenon where the principal might opt to forgo inspecting a subset of agents, especially when there is a disparity in inspection costs. Imagine a scenario in which agents are identical in all respects except for their inspection costs. Specifically, assume that the inspection costs for agents take on one of two distinct values: high and low. Half of the agents have high inspection costs, while the other half have low costs. In this case, we show the following result:
\begin{proposition} \label{proposition:illustrative}
Given the scenario described above and assuming $\beta_{\min}^\ell = 0$ for all $\ell \in [m]$, there exists a threshold $M$ such that for every $m \geq M$, the principal does not inspect half of the agents associated with a higher inspection cost. Specifically, in any optimal solution to the principal's optimization problem \eqref{eqn:principal_problem_multiAgent}, these agents' allocated level of inspection is zero.
\end{proposition}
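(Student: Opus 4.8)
The plan is to analyze the structure of the optimal solution to \eqref{eqn:principal_problem_multiAgent} in the regime where $m$ is large, exploiting the fact that all $m$ agents are identical except for their inspection cost, which takes one of two values $\kappa_I^{\mathrm{hi}} > \kappa_I^{\mathrm{lo}}$, with $m/2$ agents of each type. Since $\beta_{\min}^\ell = 0$ for all $\ell$, the budget constraint is simply $\sum_\ell \bar\beta^\ell \le B$, and the per-agent utility functions come in two shapes: $\mathcal{U}_{\mathrm{lo}}(\cdot)$ for the low-cost agents and $\mathcal{U}_{\mathrm{hi}}(\cdot)$ for the high-cost ones, where $\mathcal{U}_{\mathrm{hi}}(\bar\beta) = \mathcal{U}_{\mathrm{lo}}(\bar\beta) - (\kappa_I^{\mathrm{hi}} - \kappa_I^{\mathrm{lo}}) \cdot \beta^{\mathrm{opt}}(\bar\beta)$ pointwise, so that $\mathcal{U}_{\mathrm{hi}} < \mathcal{U}_{\mathrm{lo}}$ whenever the associated optimal $\beta$ is strictly positive. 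Recall from the discussion preceding \eqref{eqn:principal_problem_multiAgent} that each $\mathcal{U}_p^\ell(\cdot)$ is weakly increasing and made up of segments that are either constant or concave-and-increasing; I would first record that as a consequence $\mathcal{U}_p^\ell(\bar\beta) = \mathcal{U}_p^\ell(0)$ for $\bar\beta \ge \beta_1^\ell$ (the function has flattened out once $\gamma$ reaches the regime where $\beta(\gamma) = 0$), i.e.\ each agent has a finite ``saturation level'' $\bar\beta^\star$ beyond which extra inspection budget is useless.

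The key economic observation is a marginal-value comparison. Define, for the low-cost shape, the quantity $\Delta_{\mathrm{lo}} := \mathcal{U}_{\mathrm{lo}}(\bar\beta^\star_{\mathrm{lo}}) - \mathcal{U}_{\mathrm{lo}}(0) > 0$, the total gain available from fully funding one low-cost agent. I claim that when $m$ is large enough, allocating budget to any high-cost agent is dominated: because $B$ is fixed while the number of low-cost agents grows like $m/2$, for $m \ge M$ there is always ``room'' to move any budget currently sitting on a high-cost agent onto low-cost agents that have not yet been saturated, and doing so strictly increases the objective. Concretely, in an optimal solution suppose some high-cost agent receives $\bar\beta^{\mathrm{hi}} > 0$. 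The utility it contributes is at most $\mathcal{U}_{\mathrm{hi}}(\bar\beta^{\mathrm{hi}}) = \mathcal{U}_{\mathrm{lo}}(\bar\beta^{\mathrm{hi}}) - (\kappa_I^{\mathrm{hi}} - \kappa_I^{\mathrm{lo}})\beta^{\mathrm{opt}}(\bar\beta^{\mathrm{hi}})$; moving this same mass of budget to low-cost agents yields at least $\mathcal{U}_{\mathrm{lo}}(\bar\beta^{\mathrm{hi}})$ additional utility provided the low-cost agents are not already saturated — and they cannot all be saturated, since total low-cost saturation demand is $(m/2)\bar\beta^\star_{\mathrm{lo}}$, which exceeds $B$ once $m > 2B/\bar\beta^\star_{\mathrm{lo}}$, so there is spare low-cost capacity whenever a high-cost agent is being funded. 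Thus the swap gains at least $(\kappa_I^{\mathrm{hi}} - \kappa_I^{\mathrm{lo}})\beta^{\mathrm{opt}}(\bar\beta^{\mathrm{hi}}) > 0$ (using that $\bar\beta^{\mathrm{hi}} > 0$ forces the associated $\beta^{\mathrm{opt}} > 0$ by Lemma \ref{lemma:gamma_increasing}), contradicting optimality.

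The technical obstacles I anticipate are two. First, the redistribution step must be done carefully because $\mathcal{U}_{\mathrm{lo}}(\cdot)$ is only piecewise concave with flat spots, so moving a chunk of budget onto several low-cost agents may not realize the full $\mathcal{U}_{\mathrm{lo}}(\bar\beta^{\mathrm{hi}})$ in one agent; I would instead argue by a global exchange argument — compare the claimed optimum against the alternative that puts zero on all high-cost agents and solves the single-type problem over the $m/2$ low-cost agents with budget $B$ — and show the latter is weakly better for all $m$ and strictly better once there is slack, using concavity of $\mathcal{U}_{\mathrm{lo}}$ on its active segments to spread budget without loss. Second, I need the threshold $M$ to be uniform, which requires only that $\bar\beta^\star_{\mathrm{lo}}$ and $\Delta_{\mathrm{lo}}$ (and $\kappa_I^{\mathrm{hi}} - \kappa_I^{\mathrm{lo}}$) are fixed positive constants independent of $m$ — true since the agents' action sets and costs do not change with $m$ — so taking $M := \lceil 2B/\bar\beta^\star_{\mathrm{lo}}\rceil + 1$ suffices. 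The main obstacle is thus the exchange argument handling the flat/concave segment structure cleanly; everything else is bookkeeping around the fixed constants versus growing $m$.
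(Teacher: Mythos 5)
Your overall strategy---a global exchange argument that strips the budget off any funded high-cost agent and redistributes it among low-cost agents---is genuinely different from the paper's, but as written it has a gap that is not just bookkeeping. The step that fails is the claim that moving mass $\bar\beta^{\mathrm{hi}}$ onto unsaturated low-cost agents ``yields at least $\mathcal{U}_{\mathrm{lo}}(\bar\beta^{\mathrm{hi}})$ additional utility.'' The gain from adding budget $b$ to a low-cost agent currently allocated $x$ is $\mathcal{U}_{\mathrm{lo}}(x+b)-\mathcal{U}_{\mathrm{lo}}(x)$, and because $\mathcal{U}_{\mathrm{lo}}$ is only piecewise concave with flat segments (this non-concavity is exactly what the paper emphasizes just before the proposition: identical agents may optimally be inspected very unequally), this increment bears no useful relation to $\mathcal{U}_{\mathrm{lo}}(b)-\mathcal{U}_{\mathrm{lo}}(0)$; ``unsaturated'' only rules out being past the last increasing segment, not sitting at the start of a long flat one. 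Your fallback---comparing against the allocation that zeroes out all high-cost agents and re-optimizes over the low-cost half, asserting it is ``weakly better for all $m$''---is essentially the proposition itself; that inequality does not follow from $\mathcal{U}_{\mathrm{hi}}\le\mathcal{U}_{\mathrm{lo}}$ pointwise, and you supply no mechanism for it. A smaller problem: $\bar\beta^{\mathrm{hi}}>0$ is only a cap, and Lemma \ref{lemma:gamma_increasing} does not force the inner maximizer in \eqref{eqn:principal_utility_bar_beta} to be positive, so the strict gain you extract from the swap can vanish.

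The missing idea is localization. The paper first shows, by a swap argument using $\mathcal{U}_p^{\mathrm{lo}}\ge\mathcal{U}_p^{\mathrm{hi}}$, that every low-cost allocation weakly exceeds every high-cost allocation, and then applies pigeonhole: the smallest low-cost allocation is at most $2B/m$, hence so is the largest high-cost allocation. For $m$ large, both therefore lie in the first differentiable segment of their respective utility curves, where the two marginal utilities differ by exactly the constant $\kappa_I^{H}-\kappa_I^{L}$ plus a term $R\bigl(\gamma'(\bar\beta^{m/2})-\gamma'(\bar\beta^{m/2+1})\bigr)$ that tends to zero as both arguments are squeezed into $[0,2B/m]$. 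Equality of marginal values at an interior optimum then yields a contradiction. Your threshold $M=\lceil 2B/\bar\beta^{\star}_{\mathrm{lo}}\rceil+1$ guarantees spare low-cost capacity but never places the relevant allocations on a common segment where a clean marginal comparison is available; to salvage the exchange route you would need to first establish the ordering-plus-pigeonhole step and then run the exchange locally within that first segment---at which point you have essentially reproduced the paper's argument.
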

The proof can be found in the appendix. This outcome emphasizes that as the number of agents increases, the principal may choose to not monitor a substantial subset of agents that entail high inspection costs, directing its entire inspection budget towards those that are less costly to inspect. Conversely, this suggests that agents who manage to elevate the inspection costs for the principal might escape the inspection and, as a result, secure a higher payment.
}
\section{Conclusion}
We study the role of safety and regulatory inspections in contract design problems. In particular, we consider a principal who can use random and costly inspections, in addition to payments, to incentivize agents to take safe actions. For the single-agent setting, we provide an efficient algorithm to find the optimal linear contract, and also establish how the payment fraction and inspection probability vary as the costs of inspection or adherence to safety protocols increase. We extend our results to the multi-agent setting, where we draw connections with the Knapsack problem to find an approximately optimal set of contracts in that case. Our case study on the structure of the solution illustrates how agents who are most costly to inspect may escape monitoring.

We believe our framework can be used and extended to study further problems surrounding regulatory actions in contract design. In particular, one interesting future direction would be to study the dynamic setting where the principal interacts with the agents across multiple rounds and must decide how often to inspect different agents.
\section{Acknowledgment}
The authors thank Alex Teytelboym, Ali Makhdoumi, and Azarakhsh Malekian for insightful discussions and comments. Alireza Fallah acknowledges support by the National Science Foundation under grant number DMS-1928930 and by the Alfred P.\ Sloan Foundation under grant G-2021-16778, during his residence at the Simons Laufer Mathematical Sciences Institute in Berkeley, California, during the Fall 2023 semester. Michael Jordan acknowledges support from the Mathematical Data Science program of the Office of Naval Research under grant number N00014-21-1-2840 and the European Research Council Synergy Program.
\bibliography{References}
\newpage
\renewcommand{\theequation}{\mbox{A\arabic{equation}}}
\renewcommand{\thesection}{\mbox{A\arabic{section}}}
\setcounter{equation}{0}
\setcounter{section}{0}
\renewcommand{\thelemma}{\mbox{A\arabic{lemma}}}
\setcounter{lemma}{0}
\setcounter{claim}{0}
\renewcommand{\theclaim}{\mbox{A\arabic{claim}}}
\appendix
\section{Deferred Proofs}
\subsection{Proof of Lemma \ref{lemma:inspection_needed}}
Suppose some safe action $(a_i,1)$ is implementable. Then, by IC, we should have
\begin{equation}
\mathcal{U}_a(a_i,1) \geq \mathcal{U}_a(a_i,0),    
\end{equation}
which implies
\begin{equation}
\gamma R_i - c_i - \kappa_S \geq (1-\alpha) \gamma R_i - c_i.    
\end{equation}
As a consequence, we should have
\begin{equation}
\alpha \gamma R_i \geq \kappa_S.    
\end{equation}
Using $1 \geq \gamma $ and $R_n \geq R_i$, implies that $\alpha R_n \geq \kappa_S$ which contradicts the assumption given in the lemma's statement.
\subsection{Proof of Lemma \ref{lemma:gamma_increasing}}
As depicted in Figure \ref{fig:findbeta}, we can cast $\beta(\gamma)$ as:
\begin{equation}
\beta(\gamma) = \max \left  \{1- \frac{u_h^{-1}(u_h(\gamma) - \kappa_S)}{\gamma(1-\alpha)}, 0 \right \}.    
\end{equation}
Hence, it suffices to show that
\begin{equation} \label{eqn:def_f}
f(\gamma):= \frac{u_h^{-1}(u_h(\gamma) - \kappa_S)}{\gamma(1-\alpha)}    
\end{equation}
is strictly increasing in $\gamma$. Note that $u_h(\cdot)$ is a piecewise linear function, and therefore, it is differentiable on all but finitely many points. As a result, and due to the strict monotonicity of $u_h(\cdot)$, the function $f(\cdot)$ is likewise differentiable, except at a finite number of points. Consequently, there exists a sequence $0 = \rho_1 < \rho_2 < \cdots < \rho_v = 1$ such that $f$ is differentiable within each interval $(\rho_j, \rho_{j+1})$. Furthermore, since $u_h(\cdot)$ and its inverse are both continuous, $f(\cdot)$ is also continuous. We claim that it suffices to show that the derivative of $f(\cdot)$ is positive wherever it is differentiable. By proving this, we establish that $f$ is increasing on each interval $(\rho_j, \rho_{j+1})$, which, combined with the continuity of $f$, concludes our proof.    

To show the aforementioned claim holds, note that the derivative of $f$ is given by:
\begin{align} \label{eqn:derivative_f}
f'(\gamma) = \frac{\frac{u'_h(\gamma)}{u'_h(u_h^{-1}(u_h(\gamma) - \kappa_S))} \gamma - u_h^{-1}(u_h(\gamma)-\kappa_S)}{\gamma^2(1-\alpha)}.    
\end{align}
Let us denote $u_h^{-1}(u_h(\gamma)-\kappa_S)$ by $\tilde{\gamma}$. To show the derivative \eqref{eqn:derivative_f} is positive, we need to show that $\gamma u'_h(\gamma) > \tilde{\gamma} u'_h(\tilde{\gamma})$. First, notice that, since $u_h(\cdot)$ is increasing, $\tilde{\gamma} < \gamma$. Second, notice that $u_h(\cdot)$ is convex as it is the maximum of a collection of linear (and hence convex) functions. Consequently, $u'_h(\cdot)$ is increasing, and thus, $u'_h(\gamma) > u'_h(\tilde{\gamma})$ as well, which completes the proof. 
\subsection{Proof of Lemma \ref{lemma:gamma_convex}}
Recall the definition of function $f$ from the proof of Lemma \ref{lemma:gamma_increasing}. It suffices to show $f(\cdot)$ is strictly concave over the interval $[\gamma_j, \gamma_{j+1}]$. To do so, we first show $f(\cdot)$ is strictly concave over the interval $(\gamma_{j,q}, \gamma_{j,q+1})$ for any $q$.

Let $\gamma \in (\gamma_{j,q}, \gamma_{j,q+1})$ and suppose  $\tilde{\gamma}$ falls within the interval $[\gamma_{j_q}, \gamma_{j_{q+1}}]$ where $j_q \leq j$. Recall the derivative of $f$ in \eqref{eqn:derivative_f} is given by 
\begin{equation}
f'(\gamma) = \frac{\frac{u'_h(\gamma)}{u'_h(\tilde{\gamma})} \gamma - \tilde{\gamma} }{\gamma^2(1-\alpha)}.    
\end{equation}
Note that $u'_h(\gamma) = R_{i_j}$ and $u'_h(\tilde{\gamma}) = R_{i_{j_q}}$. Therefore, we can rewrite the derivative of $f$ at $\gamma$ as:
\begin{equation} \label{eqn:derivative_f_simple1}
f'(\gamma) = \frac{R_{i_j}\gamma - R_{i_{j_q}} \tilde{\gamma}}{R_{i_{j_q}} \gamma^2(1-\alpha)}.    
\end{equation}
Recall that $u_h(\tilde{\gamma}) = R_{i_{j_q}} \tilde{\gamma} - c_{i_{j_q}}$ is equal to $u_h(\gamma) - \kappa_S$ where $u_h(\gamma) = R_{i_j} \gamma - c_{i_j}$. As a result, we can simplify the numerator of \eqref{eqn:derivative_f_simple1} to $c_{i_j} - c_{i_{j_q}} + \kappa_S$. Thus, we have
\begin{equation} \label{eqn:derivative_f_simple2}
f'(\gamma) = \frac{c_{i_j} - c_{i_{j_q}} + \kappa_S}{R_{i_{j_q}} \gamma^2(1-\alpha)}.    
\end{equation}
Therefore, $f'(\gamma)$ is decreasing over $(\gamma_{j,q}, \gamma_{j,q+1})$, and hence, $f'(\cdot)$ is concave over this interval. Moreover, as we go from one interval to another, i.e., as $q$ increases, $R_{i_{j_q}}$ and $c_{i_{j_q}}$ both increase, meaning that $f'(\gamma)$ further decreases. This, along with the fact that minimum of concave functions is also concave, implies that $f(\cdot)$ is concave over the whole interval $[\gamma_j, \gamma_{j+1}]$. This completes our proof.  
\subsection{Proof of Theorem \ref{theorem:singleAgent_comparativestatics}}
\textbf{Proof of part (1):} Notice that changing $\kappa_I$ does not change the $\{\gamma_j\}_j$ and the $\beta(\cdot)$ function. As a consequence, changing $\kappa_I$ does not change the actions that $(\gamma^*, \beta^*)$ and $(\gamma'^*, \beta'^*)$ implement. That said, let us denote the reward of the actions implemented by $(\gamma^*, \beta^*)$ and $(\gamma'^*, \beta'^*)$ by $R$ and $R'$, respectively. Also, note that it suffices to show $\beta'^* \leq \beta^*$, and the other result $\gamma'^* \geq \gamma^*$ will be implied using the fact that $\beta(\cdot)$ is a decreasing function.

Suppose $\kappa_I$ is increased to to $\kappa'_I$. Since $(\gamma^*, \beta^*)$ is the optimal action with $\kappa_I$, we have
\begin{equation} \label{eqn:proof_comparative_1}
(1-\gamma^*) R - \beta^* \kappa_I \geq (1-\gamma'^*) R' - \beta'^* \kappa_I.    
\end{equation}
We prove the desired result by contradiction. Suppose $\beta'^* > \beta^*$. Hence, we have
\begin{equation}\label{eqn:proof_comparative_2}
-\beta^*(\kappa'_I - \kappa) > -\beta'^*(\kappa'_I - \kappa).    
\end{equation}
Adding the two sides of \eqref{eqn:proof_comparative_1} and \eqref{eqn:proof_comparative_2} implies
\begin{equation}
(1-\gamma^*) R - \beta^* \kappa'_I > (1-\gamma'^*) R' - \beta'^* \kappa'_I.    
\end{equation}
However, this is in contradiction with the assumption that $(\gamma'^*, \beta'^*)$ is an optimal contract when the cost of inspection is $\kappa'_I$. This completes the proof of this part.\\
\textbf{Proof of part (2):} Increasing $\kappa_S$ to $\kappa'_S$ does not change $\{\gamma_j\}$'s and the upper envelope function $u_h(\cdot)$, and therefore, does not change the actions that $(\gamma^*, \beta^*)$ and $(\gamma''^*, \beta''^*)$ implement. Let us denote the reward of the actions implemented by these two contracts by $R$ and $R''$, respectively. On the other hand, changing $\kappa_S$ changes the $\beta(\cdot)$ function to a new function $\hat{\beta}(\cdot)$. It is straightforward to see that $\hat{\beta}(\cdot)$ is pointwise larger than $\beta(\cdot)$.

Using the optimality of $(\gamma^*, \beta^*)$ with $\kappa_S$, we have
\begin{equation} \label{eqn:proof_comparative_3}
(1- \gamma^*)R - \beta(\gamma^*) \kappa_I \geq (1-\gamma''^*)R'' - \beta(\gamma''^*) \kappa_I.    
\end{equation}
The optimality of $(\gamma''^*, \beta''^*)$ with $\kappa'_S$ implies
\begin{equation}\label{eqn:proof_comparative_4}
(1-\gamma''^*)R'' - \hat{\beta}(\gamma''^*) \kappa_I \geq (1- \gamma^*)R - \hat{\beta}(\gamma^*) \kappa_I.    
\end{equation}
Summing the two sides of \eqref{eqn:proof_comparative_3} and \eqref{eqn:proof_comparative_4} and simplifying it gives us:
\begin{equation}\label{eqn:proof_comparative_5}
 \hat{\beta}(\gamma^*) - \beta(\gamma^*) \geq  \hat{\beta}(\gamma''^*) - \beta(\gamma''^*).
\end{equation}
We next make the following claim:
\begin{claim} \label{claim:pi_S_derivative}
$\hat{\beta}(\gamma) - \beta(\gamma)$ is a decreasing function of $\gamma$.     
\end{claim}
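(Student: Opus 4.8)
The plan is to use the explicit formula for $\beta(\cdot)$ derived in the proof of Lemma~\ref{lemma:gamma_convex}. Recall that
\begin{equation*}
\beta(\gamma) = \max\left\{1 - f(\gamma), 0\right\}, \qquad f(\gamma) = \frac{u_h^{-1}(u_h(\gamma) - \kappa_S)}{\gamma(1-\alpha)},
\end{equation*}
and similarly $\hat\beta(\gamma) = \max\{1 - \hat f(\gamma), 0\}$ where $\hat f$ is obtained by replacing $\kappa_S$ with $\kappa'_S > \kappa_S$. Since increasing $\kappa_S$ shifts $u_h^{-1}(u_h(\gamma) - \cdot)$ to a strictly smaller value, we have $\hat f(\gamma) < f(\gamma)$ pointwise, hence $\hat\beta(\gamma) \geq \beta(\gamma)$, as already noted. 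The claim is about the \emph{difference} $\hat\beta(\gamma) - \beta(\gamma)$ being monotone decreasing in $\gamma$.

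First I would reduce to the region where both $\beta$ and $\hat\beta$ are strictly positive: for small $\gamma$ near $\gamma_1$ one or both may be clipped at $0$, but on any interval where $\beta(\gamma) = 0$ the difference equals $\hat\beta(\gamma) \geq 0$ and I can handle the transition points by continuity (all four functions $\beta, \hat\beta, f, \hat f$ are continuous, as established in the proof of Lemma~\ref{lemma:gamma_increasing}). On the interior region where both are positive, $\hat\beta(\gamma) - \beta(\gamma) = f(\gamma) - \hat f(\gamma)$, so it suffices to show $f(\gamma) - \hat f(\gamma)$ is decreasing, i.e. that $\hat f'(\gamma) \geq f'(\gamma)$ wherever both are differentiable (which is all but finitely many points). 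Using the simplified derivative formula \eqref{eqn:derivative_f_simple2} from the proof of Lemma~\ref{lemma:gamma_convex}, on a subinterval where $u_h(\gamma) = h_{i_j}(\gamma)$ and the corresponding $\tilde\gamma$ lies on segment $i_{j_q}$,
\begin{equation*}
f'(\gamma) = \frac{c_{i_j} - c_{i_{j_q}} + \kappa_S}{R_{i_{j_q}}\,\gamma^2(1-\alpha)},
\end{equation*}
and the analogous expression holds for $\hat f'$ with $\kappa_S$ replaced by $\kappa'_S$ and with a possibly different index $i_{\hat j_q}$ (since $\hat{\tilde\gamma} = u_h^{-1}(u_h(\gamma) - \kappa'_S) < \tilde\gamma$, it lies on the same or an earlier segment, so $R_{i_{\hat j_q}} \leq R_{i_{j_q}}$ and $c_{i_{\hat j_q}} \leq c_{i_{j_q}}$).

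The key computation is then to verify $\hat f'(\gamma) - f'(\gamma) \geq 0$. The most transparent way to argue this is to differentiate $f$ with respect to the parameter directly: writing $g(\gamma,\kappa) = u_h^{-1}(u_h(\gamma)-\kappa)/(\gamma(1-\alpha))$, one checks that $\partial_\gamma g$ is increasing in $\kappa$. Concretely, from \eqref{eqn:derivative_f_simple2}, $f'(\gamma) \gamma^2 (1-\alpha) = (c_{i_j} + \kappa_S - c_{i_{j_q}})/R_{i_{j_q}}$, which I recognize as exactly $\tilde\gamma / \gamma \cdot (\text{something})$; more usefully, note $c_{i_j} + \kappa_S - c_{i_{j_q}} = R_{i_{j_q}}\tilde\gamma - R_{i_j}\gamma + R_{i_j}\gamma + \kappa_S - \ldots$; cleaner still, observe that $\gamma^2(1-\alpha) f'(\gamma) = \gamma \cdot \frac{d}{d\gamma}[\gamma f(\gamma)] - \gamma f(\gamma)$ is not the route — instead I would use that $\gamma(1-\alpha) f(\gamma) = \tilde\gamma$ and $u_h(\tilde\gamma) = u_h(\gamma) - \kappa_S$, so $R_{i_{j_q}}\tilde\gamma = R_{i_j}\gamma - c_{i_j} + c_{i_{j_q}} + \kappa_S$ — wait, that gives $\tilde\gamma$ itself, and differentiating in $\kappa_S$ at fixed $\gamma$ gives $\partial_{\kappa_S}\tilde\gamma = 1/R_{i_{j_q}} = 1/u_h'(\tilde\gamma) > 0$ while $\partial_\gamma \tilde\gamma = R_{i_j}/R_{i_{j_q}} = u_h'(\gamma)/u_h'(\tilde\gamma)$. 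So $f = \tilde\gamma/(\gamma(1-\alpha))$ and $\partial_{\kappa_S} f' = \partial_\gamma(\partial_{\kappa_S} f) = \partial_\gamma\bigl(\frac{1}{R_{i_{j_q}}\gamma(1-\alpha)}\bigr) = -\frac{1}{R_{i_{j_q}}\gamma^2(1-\alpha)} - \frac{\gamma (1-\alpha) \partial_\gamma R_{i_{j_q}}}{(R_{i_{j_q}}\gamma(1-\alpha))^2}$; since $R_{i_{j_q}}$ is a step function of $\gamma$ that is weakly increasing, $\partial_\gamma R_{i_{j_q}} \geq 0$ in the distributional sense, making $\partial_{\kappa_S} f' \leq 0$, i.e. $f'$ decreases as $\kappa_S$ grows, i.e. $\hat f'(\gamma) \leq f'(\gamma)$.

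I expect the main obstacle to be bookkeeping around the partition points, not the core estimate. Specifically: (i) $f$ and $\hat f$ have \emph{different} breakpoint sets $\{\gamma_{j,q}\}$ versus $\{\hat\gamma_{j,q}\}$ because $\tilde\gamma$ and $\hat{\tilde\gamma}$ cross segment boundaries at different places, so to compare $f'$ and $\hat f'$ at a common $\gamma$ I must track possibly distinct indices $i_{j_q}$ and $i_{\hat j_q}$; (ii) I must argue that at the finitely many nondifferentiability points the one-sided derivatives still satisfy the inequality, or simply invoke continuity of $f - \hat f$ together with monotonicity on each subinterval to conclude global monotonicity; and (iii) I must stitch across the clipping region where $\beta$ or $\hat\beta$ hits $0$ — there, since $\beta(\gamma)=0 \implies f(\gamma)\geq 1$, and the difference $\hat\beta - \beta$ is continuous and equals $\max\{1-\hat f,0\}$, which is itself decreasing wherever $\hat\beta > 0$ by Lemma~\ref{lemma:gamma_increasing} applied to $\hat\beta$ — piecing these cases together gives that $\hat\beta - \beta$ is decreasing on all of $[\gamma_1, 1]$. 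With Claim~\ref{claim:pi_S_derivative} in hand, \eqref{eqn:proof_comparative_5} forces $\gamma''^* \geq \gamma^*$, completing part~(2).
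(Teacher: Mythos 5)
Your overall strategy mirrors the paper's: split $[0,1]$ into the regions where both, one, or neither of $\beta,\hat\beta$ is clipped at zero, handle the outer cases by monotonicity of $\hat\beta$ and continuity, and on the interior reduce to showing $\hat f'(\gamma)\ge f'(\gamma)$, i.e.\ that $f'$ increases with $\kappa_S$. The problem is a sign error in the key computation. From $u_h(\tilde\gamma)=u_h(\gamma)-\kappa_S$ you get $u_h'(\tilde\gamma)\,\partial_{\kappa_S}\tilde\gamma=-1$, so $\partial_{\kappa_S}\tilde\gamma=-1/u_h'(\tilde\gamma)=-1/R_{i_{j_q}}<0$: raising the safety cost pushes $\tilde\gamma$ \emph{down}, which is precisely why $\hat\beta\ge\beta$ pointwise. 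You wrote $\partial_{\kappa_S}\tilde\gamma=+1/R_{i_{j_q}}>0$, and the error propagates: you arrive at $\partial_{\kappa_S}f'\le 0$ and conclude $\hat f'(\gamma)\le f'(\gamma)$, which is the \emph{opposite} of the inequality $\hat f'\ge f'$ that you correctly identified two paragraphs earlier as the one you must prove. As written the argument establishes nothing (taken at face value it would make $\hat\beta-\beta$ weakly \emph{increasing}), and the contradiction with your own stated goal goes unremarked.

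The inequality you need is true, and your machinery recovers it once the sign is fixed: with $\partial_{\kappa_S}f=-1/\bigl(R_{i_{j_q}}\gamma(1-\alpha)\bigr)$, differentiating in $\gamma$ gives $+1/\bigl(R_{i_{j_q}}\gamma^2(1-\alpha)\bigr)$ plus a nonnegative contribution from the jumps of $R_{i_{j_q}}$, so $\partial_{\kappa_S}f'\ge 0$. Even more directly, \eqref{eqn:derivative_f_simple2} gives $f'(\gamma)=\bigl(c_{i_j}-c_{i_{j_q}}+\kappa_S\bigr)/\bigl(R_{i_{j_q}}\gamma^2(1-\alpha)\bigr)$, and passing to $\kappa_S'>\kappa_S$ can only increase the numerator (larger $\kappa_S$, smaller $c_{i_{\hat j_q}}$ since $\hat{\tilde\gamma}\le\tilde\gamma$) and decrease the denominator (smaller $R_{i_{\hat j_q}}$), so $\hat f'\ge f'$ on each common subinterval of differentiability; continuity then yields global monotonicity of $f-\hat f$. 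This is essentially the paper's route, which also sidesteps your distributional second derivative of the step function $R_{i_{j_q}}$ by instead writing $g(\gamma,\kappa_S')-g(\gamma,\kappa_S)=\int_{\kappa_S}^{\kappa_S'}\partial_\kappa g(\gamma,\kappa)\,d\kappa$ and observing that $\partial_\kappa g(\gamma,\kappa)=-1/\bigl(\gamma(1-\alpha)u_h'(\tilde\gamma)\bigr)$ is increasing in $\gamma$, so that only first derivatives are ever needed. Your bookkeeping concerns (i)--(iii) are real but minor; the sign error is the substantive defect.
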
 
Notice that this claim, along with \eqref{eqn:proof_comparative_5}, gives us the desired result. It remains to show that the claim holds. 

\textbf{Proof of Claim \ref{claim:pi_S_derivative}:} Recall from Lemma \ref{lemma:gamma_increasing} that both $\beta(\cdot)$ and $\hat{\beta}(\cdot)$ are decreasing functions of $\gamma$. Also, as stated above, for any $\gamma$, $\hat{\beta}(\gamma) \geq \beta(\gamma)$. Hence, we could divide $[0,1]$ to at most three intervals: (1) $[0,\gamma^{th}_1)$ where both $\beta(\gamma)$ and $\hat{\beta}(\gamma)$ are positive, (2) $[\gamma^{th}_1, \gamma^{th}_2)$ where $\beta(\gamma) = 0$ but $\hat{\beta}(\gamma)$ is positive, and (3) $[\gamma^{th}_2, 1]$ where both  $\beta(\gamma)$ and $\hat{\beta}(\gamma)$ are zero. 
We need to verify that Claim \ref{claim:pi_S_derivative} holds over all these three intervals. Note that since $\beta(\gamma)$ and $\hat{\beta}(\gamma)$ are continuous, once we have the claim established in all these three intervals, it also holds over the whole interval $[0,1]$. 

Obviously Claim \ref{claim:pi_S_derivative} holds for the third interval $[\gamma^{th}_2, 1]$ . It also holds over the second interval, i.e., $[\gamma^{th}_1, \gamma^{th}_2)$, since $\hat{\beta}(\gamma) - \beta(\gamma) = \hat{\beta}(\gamma)$ which is a decreasing function of $\gamma$. Hence, it remains to show that our claim holds over the first interval, i.e., when $\beta(\gamma)$ and $\hat{\beta}(\gamma)$ are both positive. 
In this case $\beta(\gamma)$ is given by
\begin{equation}
\beta(\gamma) = 1- \frac{u_h^{-1}(u_h(\gamma) - \kappa_S)}{\gamma(1-\alpha)}.    
\end{equation}
Therefore, we need to show that $g(\gamma, \kappa'_S) - g(\gamma, \kappa_S)$, with 
\begin{equation} 
g(\gamma, \kappa_S):= \frac{u_h^{-1}(u_h(\gamma) - \kappa_S)}{\gamma(1-\alpha)},
\end{equation}
is an increasing function of $\gamma$. Notice that $u_h^{-1}(\cdot)$ is a continuous function which is nondifferentiable at finitely many points. Hence, $g(\gamma, \kappa_S)$, as a function of $\kappa_S$, is continuous and nondifferentiable at all but finitely many points. Next, we claim 
\begin{equation} \label{eqn:g_pi_integral}
g(\gamma, \kappa'_S) - g(\gamma, \kappa_S) = \int_{\kappa_S}^{\kappa'_S} \frac{\partial}{\partial \kappa} g(\gamma, \kappa) d \kappa,     
\end{equation}
where the integral in \eqref{eqn:g_pi_integral} is the Lebesgue integral. To establish this result, we can partition the interval $[\kappa_S, \kappa'_S]$ into subintervals where $g(\gamma,\kappa)$ is differentiable with respect to to $\kappa$ over each of them. Subsequently, we can apply the fundamental theorem of calculus to each of these subintervals and then integrate them using the continuity of $g$ to derive \eqref{eqn:g_pi_integral}. Another way to establish \eqref{eqn:g_pi_integral} is to first recognize that $u_h^{-1}$ is an absolutely continuous function as it is a piecewise linear function. Then we use the generalized version of the fundamental theorem of calculus (see Theorem 7.18 in \cite{rudin1987real}).

Now, note that the derivative of $g$ with respect to $\kappa$ is given by
\begin{equation}
\frac{\partial}{\partial \kappa} g(\gamma, \kappa) = \frac{-1}{\gamma(1-\alpha)u_h(u_h^{-1}(u_h(\gamma)-\kappa_S))}.   
\end{equation}
Since $u_h(\cdot)$ is increasing function, it is straightforward to see $\frac{\partial}{\partial \kappa} g(\gamma, \kappa)$ is an increasing function of $\gamma$. This, along with \eqref{eqn:g_pi_integral}, shows that $g(\gamma, \kappa'_S) - g(\gamma, \kappa_S)$ is an increasing function of $\gamma$ which completes the proof of Claim \ref{claim:pi_S_derivative}.
{
\subsection{Proof of Lemma \ref{claim:error_discretization} in Theorem \ref{theorem:multi-agent-DP}}
It is sufficient to prove that for any $\ell$, $\mathcal{U}_p^\ell(\cdot)$ is Lipschitz with a parameter bounded by
\begin{equation*}
\frac{(R_n^\ell)^2}{\kappa^\ell_S}-\kappa_I^\ell.
\end{equation*}
By taking the optimal solution and rounding agent $\ell$'s inspection level down to the nearest element of the grid $\mathcal{G}^\ell$, the discretization error will be limited to $\delta \left [ \frac{(R_n^\ell)^2}{\kappa^\ell_S}-\kappa_I^\ell \right]$ given that the grid has a resolution of $\delta$. Note that by rounding down, we ensure the inspection budget is not exceeded.

Further, according to \eqref{eqn:principal_utility_bar_beta}, $\mathcal{U}_p^\ell(\beta)$ is either equal to $\mathcal{U}_p^\ell ( \gamma^\ell(\beta), \beta)$ or remains constant. Hence, to determine the upper bound on the Lipschitz parameter of $\mathcal{U}_p^\ell(\cdot)$, it is enough to find the upper bound for the derivative of $\mathcal{U}_p^\ell ( \gamma^\ell(\beta), \beta)$ as a function of $\beta$. While it might not be a continuously differentiable function, it is piecewise differentiable. Given its continuity, bounding its derivative across each segment suffices for our objective.
Recall from \eqref{eqn:U_p_ell_beta}
\begin{equation*}
\mathcal{U}_p^\ell ( \gamma^\ell(\beta), \beta) = (1-\gamma^\ell(\beta) )R_{i_j^\ell}^\ell - \beta \kappa^\ell_I,     
\end{equation*}
where $\gamma^\ell(\beta)$ is a decreasing function of $\beta$. Hence, when differentiable, we have
\begin{equation} \label{eqn:bound_d_U_beta}
\frac{d}{d\beta} ~ \mathcal{U}_p^\ell ( \gamma^\ell(\beta), \beta) \leq 
\left \vert \frac{d}{d\beta} ~ \gamma^\ell(\beta) \right \vert R_n^\ell - \kappa_I^\ell,
\end{equation}
where we used the fact that $R_n^\ell \geq R_{i_j^\ell}^\ell$. Next, using the inverse function theorem, we have
\begin{equation}
\left \vert \frac{d}{d\beta} ~ \gamma^\ell(\beta) \right \vert \leq \left \vert \frac{1}{f'^\ell(\gamma)} \right \vert_{\text{at } \gamma^\ell(\beta)},     
\end{equation}
where $f^\ell(\cdot)$ is defined for agent $\ell$ and similar to \eqref{eqn:def_f} in the proof of Lemma \ref{lemma:gamma_increasing} for the single-agent case. Note that, by \eqref{eqn:derivative_f_simple2}, we have
\begin{equation*}
f'^\ell(\gamma) \geq \frac{\kappa_S^\ell}{R_n^\ell \gamma^2 (1-\alpha)} 
\geq \frac{\kappa_S^\ell}{R_n^\ell},
\end{equation*}
which, along with \eqref{eqn:bound_d_U_beta} completes the proof of the lemma.
}
{ \subsection{Proof of Lemma \ref{lemma:scheduling}}
For any $b \in [B]$, define $\ell_b$ as the smallest integer $\ell$ such that $\sum_{j=1}^\ell \bar{\beta}^j \geq b$ (and let $\ell_0=1$ for convenience).

Inspectors' assignments are decided sequentially, progressing from inspector 1 to $B$. Specifically, inspector $b$ is assigned to agent $w_b \in \{\ell_{b-1}, \ell_{b-1}+1, \ldots, \ell_{b}\}$. Starting with the first inspector, they inspect agent $\ell < \ell_1$ with probability $\bar{\beta}^\ell$ and agent $\ell_1$ with probability $\zeta_1:= 1- \sum_{i=1}^{\ell_1-1} \bar{\beta}^i$. This assignment can be carried out using the uniform random variable generator as stated earlier.

Assuming we have determined assignments for inspectors $1, \ldots, b-1$, we next decide the agent for inspector $b$. A key challenge arises because agent $\ell_{b-1}$ is inspected with probability $\zeta_{b-1} := b-1 - \sum_{i=1}^{\ell_{b-1}-1} \bar{\beta}^i$, which could be less than its targeted inspection probability $\bar{\beta}^{\ell_{b-1}}$. Therefore, inspector $b$ should inspect agent $\ell_{b-1}$ with the remaining probability $\bar{\beta}^{\ell_{b-1}} - \zeta_{b-1}$, but this should only occur when inspector $b-1$ has opted to inspect other agents. To do so, we use the following procedure:
\begin{enumerate}[i)]
\item If $w_{b-1} = \ell_{b-1}$, then inspector $b$ inspects one of the agents in the set $\{\ell_{b-1}+1, \ldots, \ell_{b}\}$. In other words, in this scenario, inspector $b$'s single unit of inspection is distributed among these agents. In particular, for agent $\ell < \ell_{b}$, their probability of inspection is proportional to $\bar{\beta}^\ell$, and for agent $\ell_b$, it is proportional to $\zeta_b$. More formally, agent $\ell \in \{\ell_{b-1}+1, \ldots, \ell_{b}-1\}$ is inspected with probability
\begin{equation*}
\frac{\bar{\beta}^\ell}{1-\bar{\beta}^{\ell_{b-1}} + \zeta_{b-1}},    
\end{equation*}
and agent $\ell_b$ is inspected with probability 
\begin{equation*}
\frac{\zeta_b}{1-\bar{\beta}^{\ell_{b-1}} + \zeta_{b-1}}.
\end{equation*}
\item If $w_{b-1} \neq \ell_{b-1}$, then agent $\ell_{b-1}$ is inspected by probability 
\begin{equation*}
\frac{\bar{\beta}^{\ell_{b-1}} - \zeta_{b-1}}{1 - \zeta_{b-1}}.  
\end{equation*}
The remaining probability of inspection in this case is divided between agents $\{\ell_{b-1}+1, \ldots, \ell_{b}\}$ in a proportional way similar in part (i).
\end{enumerate}
The above procedure ensures that agent $\ell_{b-1}$ is not inspected by both inspectors $b-1$ and $b$. Also, it is straightforward to verify that each agent $\ell$'s probability of inspection matches the given desired level $\bar{\beta}^{\ell_{b-1}}$. This completes the proof.}
{
\subsection{Proof of Proposition \ref{proposition:illustrative}}
For simplicity of notation, assume that $m$ is even. All results proceed similarly if $m$ is odd. We denote the high and low inspection costs by $\kappa_I^H$ and $\kappa_I^L$, respectively. Without loss of generality, let us assume the inspection cost corresponding to the first $m/2$ agents is $\kappa_I^L$ and the inspection cost corresponding to the second half is $\kappa_I^H$.

Since agents differ only in their inspection costs, we deduce that
\begin{equation*}
\mathcal{U}_p^1 ( \gamma^1(\beta), \beta) = \cdots =  \mathcal{U}_p^{m/2} ( \gamma^{m/2}(\beta), \beta) >  \mathcal{U}_p^{m/2+1} ( \gamma^{m/2+1}(\beta), \beta) = \cdots = \mathcal{U}_p^m ( \gamma^m(\beta), \beta),
\end{equation*}
for any $\beta > 0$. This leads directly to
\begin{equation} \label{eqn:order_U_p}
\mathcal{U}_p^1(\beta) = \cdots = \mathcal{U}_p^{m/2}(\beta)
> \mathcal{U}_p^{m/2+1}(\beta) = \cdots = \mathcal{U}_p^m(\beta)    
\end{equation}
for any $\beta > 0$.

Let $(\bar{\beta}^1, \cdots, \bar{\beta}^m)$ be a solution to the optimization problem \eqref{eqn:principal_problem_multiAgent}. Using \eqref{eqn:order_U_p}, we infer $\bar{\beta}^i \geq \bar{\beta}^j$ for any $i \in [m/2]$ and $j \in \{m/2+1, \cdots, m\}$. The inequality becomes strict when the values are positive. To see this, note that, otherwise, the principal could swap $\bar{\beta}^i$ with $\bar{\beta}^j$ to increase their utility.

Without loss of generality, let us assume that $\bar{\beta}^{m/2}$ is the minimum inspection level among the first half of the agents, i.e., $\bar{\beta}^{m/2} = \min {\bar{\beta}^1, \cdots, \bar{\beta}^{m/2}}$, and $\bar{\beta}^{m/2+1}$ is the maximum inspection level among the second half, i.e., $\bar{\beta}^{m/2+1} = \max {\bar{\beta}^{m/2+1}, \cdots, \bar{\beta}^{m}}$. As previously discussed, $\bar{\beta}^{m/2} \geq \bar{\beta}^{m/2+1}$, and the inequality is strict if the value on the left-hand side is positive.

Now, if $\bar{\beta}^{m/2+1}=0$, then we are done. Otherwise, we have 
\begin{equation*}
\frac{2B}{m} \geq \bar{\beta}^{m/2} > \bar{\beta}^{m/2+1} > 0.    
\end{equation*}
Given that $\mathcal{U}_p^\ell(\cdot)$ is piecewise differentiable, for sufficiently large $m$, both $\bar{\beta}^{m/2}$ and $ \bar{\beta}^{m/2+1}$ will fall in the first differentiable segments of $\mathcal{U}_p^{m/2}(\cdot)$ and $\mathcal{U}_p^{m/2+1}(\cdot)$ respectively. As a result, and given that agents are similar aside from their inspection costs, for $l \in \{m/2, m/2+1\}$ we have
\begin{align} \label{eqn:derivative:first_segment}
\frac{d}{d\beta} ~ \mathcal{U}_p^{l} (\bar{\beta}^{l}) =
- \frac{d}{d\beta} ~ \gamma(\beta) R - \kappa_I^\ell,
\end{align}
for some $R$ and $\gamma(\cdot)$. 

On the other hand, note that we have
\begin{equation} \label{eqn:equal_derivatives}
\frac{d}{d\beta} ~ \mathcal{U}_p^{m/2} (\bar{\beta}^{m/2}) =      
\frac{d}{d\beta} ~ \mathcal{U}_p^{m/2+1} (\bar{\beta}^{m/2+1}), 
\end{equation}
because if this weren't the case, a slight increase in the inspection level of the one with the higher derivative, along with the decrease by the same amount in the one with the lower derivative, would boost the total principal utility. (The Karush–Kuhn–Tucker (KKT) conditions can also be used to arrive at this result, as, in fact, the proof of the KKT theorem uses a rationale analogous to the one proposed here~\citep[cf.][]{bertsekas1997nonlinear}).

Putting \eqref{eqn:derivative:first_segment} and \eqref{eqn:equal_derivatives} together, we have
\begin{equation}
\kappa_I^{H} - \kappa_I^{L} =
R \left ( \frac{d}{d\beta} ~ \gamma(\beta^{m/2}) R - \frac{d}{d\beta} ~ \gamma(\beta^{m/2+1}) R \right ).
\end{equation}
Notice that, as $m$ grows, the right-hand side goes to zero as $\frac{d}{d\beta} ~ \gamma(\cdot) R$ is a continuous function and both $\beta^{m/2}, \beta^{m/2+1} \in [0, 2B/m]$. However, the left-hand side remains constant which leads to a contradiction. This completes the proof.
}
\end{document}